\begin{document}

\mainmatter

\title{Obtaining a Planar Graph by Vertex Deletion}
\author{D\'aniel Marx \and Ildik\'o Schlotter}

\institute{
Department of Computer Science and Information Theory, \\
Budapest University of Technology and Economics, \\
Budapest, H-1521, Hungary. \\
\email{\{dmarx,ildi\}@cs.bme.hu}\\
}

\maketitle

\begin{abstract}
In the $k$-\textsc{Apex} problem the task is to find at most $k$ vertices
whose deletion makes the given graph planar.
The graphs for which there exists a solution form a minor closed class of graphs,
hence by the deep results of Robertson and Seymour \cite{sey95,sey04}, there is an $O(n^3)$ time algorithm
for every fixed value of $k$. However, the proof is extremely complicated and the constants hidden by
the big-O notation are huge.
Here we give a much simpler algorithm for this problem with quadratic running time, by
iteratively reducing the input graph and then applying techniques for graphs of bounded treewidth.

\noindent {\bf Keywords:} Planar graph, Apex graph, FPT algorithm, Vertex deletion.
\end{abstract}

\section{Introduction}

Planar graphs are subject of wide research interest in graph theory.
There are many generally hard problems which can be solved in polynomial time when considering planar graphs,
e.g., \textsc{Maximum Clique}, \textsc{Maximum Cut}, and \textsc{Subgraph Isomorphism}
\cite{epp99,had75}.
For problems that remain NP-hard on planar graphs, we often have efficient approximation algorithms.
For example, the problems \textsc{Independent Set, Vertex Cover}, and  \textsc{Dominating Set}
admit an efficient linear time approximation scheme \cite{bak94,lip80}.
The research for efficient algorithms for problems on planar graphs is still very intensive.

Many results on planar graphs can be extended to almost planar graphs,
which can be defined in various ways.
For example, we can consider possible embeddings of a graph in a surface other than the plane.
The genus of a graph is the minimum number of handles that must be added to the
plane to embed the graph without any crossings.
Although determining the genus of a graph is NP-hard \cite{tho89},
the graphs with bounded genus are subjects of wide research.
A similar property of graphs is their crossing number, i.e.,
the minimum possible number of crossings with which the graph can be drawn in the plane.
Determining crossing number is also NP-hard \cite{gar83}.

In \cite{cai96} Cai introduced another notation to capture the distance of a
graph $G$ from a graph class $\mathcal{F}$, based on the number of certain elementary modification steps.
He defines the distance of $G$ from $\mathcal{F}$ as the minimum number of modifying steps needed to
make $G$ a member of $\mathcal{F}$.
Here, modification can mean the deletion or addition of edges or vertices.
In this paper we consider the following question: given a graph $G$ and an integer
$k$, is there a set of at most $k$ vertices in $G$, whose deletion makes $G$ planar?

It was proven by Lewis and Yannakakis in \cite{lew80} that the node-deletion problem is NP-complete
for every non-trivial hereditary graph property decidable in polynomial time.
As planarity is such a property, the problem of finding a maximum induced planar subgraph is NP-complete,
so we cannot hope to find a polynomial-time algorithm that answers the above question.
Therefore, following Cai, we study the problem in the framework of parameterized complexity
developed by Downey and Fellows \cite{dow99}.
This approach deals with problems in which besides the input $I$ an integer $k$ is also
given. The integer $k$ is referred to as the parameter.
In many cases we can solve the problem in time $O(n^{f(k)})$.
Clearly, this is also true for the problem we consider.
Although this is polynomial time for each fixed $k$, these algorithms are practically
too slow for large inputs, even if $k$ is relatively small.
Therefore, the standard goal of parameterized analysis is to take the parameter
out of the exponent in the running time.
A problem is called \emph{fixed-parameter tractable} (FPT) if it can be solved in
time $O(f(k)p(|I|))$, where $p$ is a polynomial not depending on $k$, and $f$ is an arbitrary function.
An algorithm with such a running time is also called FPT.
For more on fixed-parameter tractability see e.g.  \cite{dow99}, \cite{nie06} or \cite{flu06}.

The standard parameterized version of our problem is the
following: given a graph $G$ and a parameter $k$, the task is to
decide whether deleting at most $k$ vertices from $G$ can result in a planar graph.
Such a set of vertices is sometimes called a set of \emph{apex vertices} or \emph{apices},
so we will denote the class of graphs for which the answer is `yes' by $\textup{Apex}(k)$.
We note that Cai \cite{cai96} used the notation $\textup{Planar} + kv$ to denote this class.

In the parameterized complexity literature, numerous similar node-deletion problems have been studied.
A classical result of this type by Bodlaender \cite{bod94} and Downey and Fellows \cite{dow92} states that
the \textsc{Feedback Vertex Set} problem, asking whether a graph can be made acyclic by the deletion of at most $k$ vertices,
is FPT. The parameterized complexity of the directed version of this problem has been a long-standing open question,
and it has only been proved recently that it is FPT as well \cite{che07}.
Fixed-parameter tractability has also been proved for the problem of finding $k$ vertices whose deletion results
in a bipartite graph \cite{ree04}, or in a chordal graph \cite{marxx}.
On the negative side, the corresponding node-deletion problem for wheel-free graphs was proved to be W[2]-hard \cite{lok08}.

Considering the graph class $\textup{Apex}(k)$,
we can observe that this family of graphs is closed under taking minors.
The celebrated graph minor theorem by Robertson and Seymour states that such families can be characterized by
a set of excluded minors \cite{sey04}. They also showed that for each graph $H$ it can be tested in cubic time whether
a graph contains $H$ as a minor \cite{sey95}. As a consequence, membership for such graph classes can be decided in cubic time.
In particular, we know that there exists an algorithm with running time $O(f(k)n^3)$ that
can decide whether a graph belongs to $\textup{Apex}(k)$.
However, the proof of the graph minor theorem is non-constructive in the following sense.
It proves the existence of an algorithm for the membership test that uses the excluded minor
characterization of the given graph class, but does not provide any algorithm for determining this characterization.
Recently, an algorithm was presented in \cite{adl08} for constructing the set of excluded minors
for a given graph class closed under taking minors,
which yields a way to explicitly construct the algorithm whose existence was proved by Robertson and Seymour.
We remark that it follows also from \cite{fel89} that an algorithm for testing membership in
$\textup{Apex}(k)$ can be constructed explicitly.

Although these results provide a general tool that can be applied to our specific problem,
no direct FPT algorithm has been proposed for it so far.
In this paper we present an algorithm which decides membership for $\textup{Apex}(k)$ in $O(f(k)n^2)$ time.
Note that the presented algorithm runs in quadratic time, and hence yields a better running time than
any algorithm using the minor testing algorithm that is applied in the above mentioned approaches.
Moreover, if $G \in \textup{Apex}(k)$ then our algorithm also returns a solution,
i.e., a set $S \in V(G)$, $|S| \leq k$ such that $G-S$ is planar.

The presented algorithm is strongly based on the ideas used by Grohe in \cite{gro04}
for computing crossing number.
Grohe uses the fact that the crossing number of a graph is an upper bound for its genus.
Since the genus of a graph in $\textup{Apex}(k)$ cannot be bounded by a function of $k$,
we need some other ideas.
As in \cite{gro04}, we exploit the fact that in a graph with large treewidth
we can always find a large grid minor \cite{sey94}. Examining the structure of
the graph with such a grid minor, we can reduce our problem to a smaller instance.
Applying this reduction several times, we finally get an instance with
bounded treewidth. Then we make use of Courcelle's Theorem \cite{cou90}, which
states that every graph property that is expressible in monadic second-order logic
can be decided in linear time on graphs of bounded treewidth.

The paper is organized as follows.
Section \ref{notation} summarizes our notation,
Sect. \ref{algo_sketch} outlines the algorithm,
Sect. \ref{phase_I} and \ref{phase_II} describe the two phases of the algorithm.

\section{Notation}
\label{notation}

Graphs in this paper are assumed to be simple, since both loops and multiple edges are
irrelevant in the
$k$-\textsc{Apex}
problem.
The vertex set and edge set of a graph $G$
are denoted by $V(G)$ and $E(G)$, respectively.
The edges of a graph are unordered pairs of its vertices.
If $G'$ is a subgraph of $G$ then $G-G'$ denotes
the graph obtained by deleting $G'$ from $G$. For a set of vertices $S$ in $G$,
we will also use $G-S$ to denote the graph obtained by deleting $S$ from $G$.

A graph $H$ is a \emph{minor} of a graph $G$ if it can be obtained from a subgraph of $G$
by contracting some of its edges. Here \emph{contracting an edge} $e$ with endpoints $a$ and $b$ means
deleting $e$, and then identifying vertices $a$ and $b$.

A graph $H$ is a \emph{subdivision} of a graph $G$ if $G$ can be obtained from $H$
by contracting some of its edges that have at least one endpoint of degree two.
Or, equivalently, $H$ is a subdivision of $G$ if $H$ can be obtained from $G$ by
replacing some of its edges with newly introduced paths
such that the inner vertices of these paths have degree two in $H$.
We refer to these paths in $H$ corresponding to edges of $G$ as \emph{edge-paths}.
A graph $H$ is a \emph{topological minor} of $G$ if $G$ has a subgraph that is a subdivision of $H$.
We say that $G$ and $G'$ are \emph{topologically isomorphic} if they both are subdivisions of a graph $H$.

The $g \times g$ \emph{grid} is the graph $G_{g \times g}$ where
$V(G_{g \times g}) = \{v_{ij} \,|\, 1 \leq i,j \leq g \}$ and
$E(G_{g \times g}) = \{ v_{ij} v_{i'j'} \,|\, |i-i'|+|j-j'| = 1 \}$.
Instead of giving a formal definition for the \emph{hexagonal grid} of radius $r$,
which we will denote by $H_r$, we refer to the illustration shown in Fig. \ref{fig_hexa}.
A \emph{cell} of a hexagonal grid is one of its cycles of length $6$.

\begin{figure}[t]
\begin{center}
\includegraphics[scale=0.25]{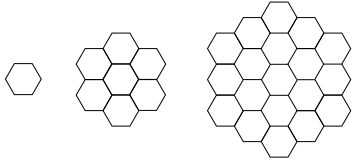}
\end{center}
\caption{The hexagonal grids $H_1$, $H_2$, and $H_3$.}
\label{fig_hexa}
\end{figure}

A \emph{tree decomposition} of a graph $G$ is a pair $(T,(V_t)_{t \in V(T)})$ where
$T$ is a tree, $V_t \subseteq V(G)$ for all $t \in V(T)$, and the following are true:

\begin{itemize}
\item
for all $v \in V(G)$ there exists a $t \in V(T)$ such that $v \in V_t$,
\item
for all $xy \in E(G)$ there exists a $t \in V(T)$ such that $x,y \in V_t$,
\item
if $t$ lies on the path connecting $t'$ and $t''$ in $T$, then
$V_t \supseteq V_{t'} \cap V_{t''}$.
\end{itemize}

The \emph{width} of such a tree decomposition is the maximum of $|V_t|-1$ taken over all $t \in V(T)$.
The \emph{treewidth} of a graph $G$, denoted by $\textup{tw}(G)$,
is the smallest possible width of a tree decomposition of $G$.
For an introduction to treewidth see e.g. \cite{bod97,die00}.

\section{Problem Definition and Overview of the Algorithm}
\label{algo_sketch}
We are looking for the solution of the following problem:

\begin{center}
\fbox{ \parbox{11cm}{
$k$-\textsc{Apex}
problem: \\[2pt]
\begin{tabular}{p{1cm}p{9.5cm}}
Input: & A graph $G=(V,E)$ and an integer $k$. \\
Task: & Find a set $X$ of at most $k$ vertices in $V$ such that $G - X$ is planar.
\end{tabular}
}}
\end{center}

Here we give an algorithm $\mathcal{A}$ which solves this problem
in time $O(f(k)n^2)$ for some function $f$, where $n$ is the number of
vertices in the input graph.
Algorithm $\mathcal{A}$ works in two phases.
In the first phase (Sect. \ref{phase_I}) we compress the given graph repeatedly, and finally
either conclude that there is no solution for our problem or construct an equivalent problem instance
with a graph having bounded treewidth. In the latter case we solve the problem in the second phase
of the algorithm (Sect. \ref{phase_II}) by applying
Courcelle's Theorem which gives a linear time algorithm for the evaluation of MSO-formulas on
bounded treewidth graphs.

To describe the first step of our algorithm, we need some deep results from graph minor theory.
The following result states that every graph having large treewidth
must contain a large grid as a minor.

\begin{theorem}[Excluded Grid Theorem, \cite{sey86}]
\label{thm_egt}
For every fixed integer  $r$ there exists an integer $w(r)$ such that
if $\textup{tw}(G) > w(r)$  then $G$ contains $G_{r \times r}$ as a minor.
\end{theorem}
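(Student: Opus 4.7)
The plan is to exploit the duality between treewidth and dense substructures. By the Seymour--Thomas duality theorem, the treewidth of $G$ equals, up to an additive constant, the maximum order of a \emph{bramble} in $G$, where a bramble is a family of pairwise touching connected subgraphs and its order is the minimum size of a vertex set hitting every member. Hence, provided $w(r)$ is chosen sufficiently large, the assumption $\textup{tw}(G) > w(r)$ forces the existence of a bramble $\mathcal{B}$ of very large order in $G$.

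From $\mathcal{B}$ I would first extract a \emph{well-linked set} $W \subseteq V(G)$ of size roughly polynomial in $r$, meaning that for any two disjoint equal-sized subsets $A, B \subseteq W$ there are $|A|$ pairwise vertex-disjoint $A$--$B$ paths in $G$. The rough idea is to shrink $\mathcal{B}$ while preserving a set of ``concentrator'' vertices, using submodularity of cut functions to ensure that the resulting set stays highly interconnected. This intermediate step converts the global ``no small separator'' information coming from treewidth into a clean collection of vertices that behave like a set of terminals in a highly connected subgraph.

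The heart of the argument is the combinatorial construction of the grid from $W$. I would partition $W$ into $r$ parts and use Menger's theorem to find $r$ pairwise vertex-disjoint paths $P_1, \dots, P_r$ joining them, which play the role of the ``rows'' of the grid. Then I would do the analogous thing for the ``columns'' $Q_1, \dots, Q_r$, and argue that after suitable rerouting the two path families cross in the correct combinatorial pattern, so that contracting appropriate subpaths yields $G_{r \times r}$ as a minor.

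The main obstacle is exactly this last step: many vertex-disjoint paths are not automatically arranged in a planar, grid-like fashion, and a priori the crossings between the $P_i$'s and $Q_j$'s could be badly tangled. Controlling the cyclic order of the crossings along each path is where the huge blow-up in $w(r)$ appears in the original proof, and obtaining any explicit (let alone polynomial) bound on $w(r)$ requires substantial additional care beyond the outline above. Since for our application we only need the existence of some $w(r)$ depending on $r$, we can afford to invoke the theorem as a black box and avoid grinding through this technical core.
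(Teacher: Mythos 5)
This theorem is not proved in the paper at all: it is imported directly from Robertson and Seymour \cite{sey86}, so the paper's ``proof'' of Theorem~\ref{thm_egt} is the citation itself. Your decision to ultimately invoke the result as a black box therefore coincides with what the paper does, and for the purposes of this paper that is the right call --- the algorithm only needs the existence of \emph{some} function $w(r)$, and the quantitative bound it actually exploits comes from the planar variant (Theorem~\ref{thm_egt_planar}), which is cited separately.

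Regarding the sketch that precedes your fallback: it is a fair description of the modern bramble/well-linked-set route (treewidth--bramble duality, extraction of a well-linked set, Menger paths for rows and columns), but read as a proof it has a genuine gap exactly where you locate it. Obtaining $r$ disjoint ``row'' paths and $r$ disjoint ``column'' paths from a well-linked set is routine; the entire content of the Excluded Grid Theorem is showing that some subfamily of these paths can be chosen or rerouted so that the row--column intersections form a consistently ordered, grid-like pattern, and that step cannot be dispatched with ``suitable rerouting.'' Since you explicitly acknowledge this and defer to \cite{sey86}, the treatment is acceptable and consistent with the paper; just be aware that the intermediate paragraphs describe the known strategy rather than supplying any proof content.
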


The grid minor guaranteed by this theorem in the case when the treewidth of the graph $G$ is large
can be found in cubic time. However, we need a linear-time algorithm for finding a large grid minor, 
so we have to make use of the following result, which states that
if the graph is planar, then the bound on $w(r)$ is linear:

\begin{theorem}[Excluded Grid Theorem for Planar Graphs, \cite{sey94}]
\label{thm_egt_planar}
For every integer $r$ and every planar graph $G$,
if $\textup{tw}(G) > 6r-5$ then $G$ contains $G_{r \times r}$ as a minor.
\end{theorem}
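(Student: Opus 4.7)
The plan is to route the proof through branchwidth, the parameter dual to treewidth that behaves especially well on planar graphs. By the classical Robertson--Seymour inequality $\textup{bw}(G) \geq \frac{2}{3}(\textup{tw}(G)+1)$ (valid whenever $\textup{bw}(G) \geq 2$), the hypothesis $\textup{tw}(G) > 6r-5$ forces $\textup{bw}(G) \geq 4r-2$. It therefore suffices to prove a branchwidth version: for a planar graph $G$, $\textup{bw}(G) \geq 4r-2$ implies that $G$ contains $G_{r\times r}$ as a minor.

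For the branchwidth version I would invoke the tangle duality of Robertson and Seymour: branchwidth at least $w$ is equivalent to the existence of a tangle $\mathcal{T}$ of order $w$ in $G$. Fix a planar embedding of $G$; a theorem of Seymour and Thomas then shows that a high-order tangle in a plane graph can be faithfully represented by simple closed curves in the plane, each crossing few edges, whose enclosed sides correspond to the ``big'' sides of the small separations in $\mathcal{T}$. Applying this along a descending chain of separations yields a family of roughly $r$ pairwise disjoint, nested closed curves $C_1, C_2, \ldots, C_r$ in the plane, each crossing fewer edges than the order of $\mathcal{T}$.

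From these nested curves I would assemble the grid in two steps. First, near each $C_i$ I would extract a cycle $D_i$ of $G$ that ``tracks'' $C_i$: since $C_i$ crosses only few edges of $G$, those edges, together with short boundary walks inside the faces of the embedding, close up into a cycle in $G$ (possibly after minor operations). Second, using the high order of $\mathcal{T}$ together with Menger's theorem, I would construct $r$ pairwise disjoint paths $P_1, \ldots, P_r$ running from the outermost $D_1$ to the innermost $D_r$, each meeting $D_2, \ldots, D_{r-1}$ in turn; any obstruction to such a radial family would be a separator of order strictly less than that of $\mathcal{T}$, contradicting tangle membership. The union of the $\{D_i\}$ and $\{P_j\}$ is then a subdivision of $G_{r\times r}$, so $G_{r\times r}$ appears as a topological, hence ordinary, minor.

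The technical heart of the argument is the second step: the radial paths must be not only disjoint but also meet each nested cycle in the correct cyclic position, so that the resulting combinatorial pattern is precisely that of a grid rather than some tangled variant. Planarity is essential here, both for the embedding-driven rigidity that locks disjoint paths into the right cyclic pattern and for the linear dependence of the bound on $r$, which relies on the plane-specific tangle-to-closed-curve correspondence that is unavailable in the general setting of Theorem~\ref{thm_egt}.
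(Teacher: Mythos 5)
This statement is not proved in the paper at all: it is imported as a black box from Robertson, Seymour and Thomas \cite{sey94}, so there is no in-paper argument to compare yours against. Judged on its own terms, your outline does follow the broad strategy of the known proof (pass to branchwidth, invoke tangle duality, build nested separating cycles and crossing radial paths in a fixed plane embedding), and the arithmetic is fine: $\textup{tw}(G)>6r-5$ gives $\textup{tw}(G)+1\geq 6r-3$, hence $\textup{bw}(G)\geq \frac{2}{3}(6r-3)=4r-2$, which matches the branchwidth form of the theorem in \cite{sey94}. But the two steps you defer are exactly where the entire content of the theorem lives, and as stated neither one goes through.

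First, there is no off-the-shelf ``theorem of Seymour and Thomas'' that converts a high-order tangle in a plane graph into $r$ pairwise disjoint nested closed curves of the kind you describe; what exists is a correspondence between tangles and slopes/metrics on the radial (or medial) graph, and extracting $r$ nested separating cycles from it is itself a substantial construction (in \cite{sey94} it proceeds via breadth-first layers of the radial graph, using the tangle order to lower-bound the number of layers). Second, and more seriously, the Menger step does not deliver a grid. A separator of size less than $\textup{ord}(\mathcal{T})$ between $D_1$ and $D_r$ does not immediately ``contradict tangle membership'': you must first establish that each $D_i$ together with its interior captures the big side of the tangle, so that both sides of the putative small separation would be forced to be big. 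And even granting $r$ disjoint $D_1$--$D_r$ paths each meeting every intermediate $D_i$, a path may intersect a given $D_i$ in many arcs and the $r$ paths may meet $D_i$ in an inconsistent cyclic order, so the union of the $D_i$'s and $P_j$'s need not contain a subdivision of $G_{r\times r}$. The rerouting and reordering needed to repair this is the technical core of the published proof; your final paragraph names this difficulty but does not resolve it. So the proposal is a reasonable road map to the correct proof, not a proof.
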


Also, we will use the following algorithmic results:

\begin{theorem}[\cite{bod96,per00}]
\label{thm_treedecomp}
For every fixed integer $w$ there exists a linear-time algorithm that, given a graph $G$, does the following:
\begin{itemize}
\item either produces a tree decomposition of $G$ of width at most $w$, or
\item outputs a subgraph $G'$ of $G$ with $\textup{tw}(G')>w$,
together with a tree decomposition of $G'$ of width at most $2w$.
\end{itemize}
\end{theorem}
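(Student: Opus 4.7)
The plan is to follow Bodlaender's linear-time framework for treewidth recognition, enhanced \`a la Perkovi\'{c}--Reed so that the negative case also produces a witness subgraph. The core is a recursion that shrinks the graph by a constant factor per level while maintaining an approximate tree decomposition.

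I would first invoke a constant-factor approximation for treewidth, which runs in linear time for fixed $w$, to obtain a tree decomposition $\mathcal{T}$ of $G$ of width $O(w)$ (or, failing that, to extract directly a witness subgraph of treewidth more than $w$). Using $\mathcal{T}$, I would identify a linear-sized set of low-degree vertices whose bags are small, and among them find a maximal ``friendly'' matching $M$ in linear time---friendly meaning that contracting each $M$-edge preserves enough local structure to allow an easy uncontraction later. Contracting $M$ gives a graph $G_1$ with $|V(G_1)| \leq (1 - c)|V(G)|$ for some $c = c(w) > 0$, and $\textup{tw}(G_1) \leq \textup{tw}(G)$ since contraction never raises treewidth.

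Then I would recurse on $G_1$. If the recursive call returns a width-$w$ decomposition $\mathcal{T}_1$ of $G_1$, I would lift it to $G$ by replacing each bag containing a contracted vertex by a small constellation of bags that reintroduces its two preimages; this yields a decomposition of $G$ of width at most $w + O(1)$. A local dynamic-programming sweep over this inflated decomposition then either trims it to width $w$ or pinpoints a subgraph $G' \subseteq G$ with $\textup{tw}(G') > w$ that nevertheless fits into a decomposition of width at most $2w$. If instead the recursive call returns a witness $G_1' \subseteq G_1$ with $\textup{tw}(G_1') > w$ and a width-$2w$ decomposition, I would lift the witness by taking the preimage of $V(G_1')$ under the contraction; since contracting $M \cap E(G')$ gives back $G_1'$ and contraction does not raise treewidth, $\textup{tw}(G') \geq \textup{tw}(G_1') > w$, and a careful bag-by-bag expansion of the width-$2w$ decomposition of $G_1'$ produces one of $G'$ of width at most $2w$. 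The resulting recurrence $T(n) = T((1-c)n) + O(n)$ solves to $O(n)$ for fixed $w$.

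The principal obstacle is controlling the width increase during the uncontraction of $M$: a naive lift can double bag sizes, so both the definition of ``friendly'' matching and the subsequent local dynamic-programming repair must be orchestrated so that at most one new vertex is introduced per bag on average, keeping the final widths at the claimed $w$ and $2w$. The compactness of the witness $G'$---large enough to force $\textup{tw} > w$ but still admitting a width-$2w$ decomposition---is the technical heart of the Perkovi\'{c}--Reed refinement and is obtained by reading it off the dynamic-programming trace run on the approximate decomposition of width $O(w)$.
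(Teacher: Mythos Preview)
The paper does not prove this theorem; it merely quotes it from Bodlaender and Perkovi\'{c}--Reed as a black box and uses it inside Lemma~\ref{alg_b}. So there is no ``paper's own proof'' to compare against, and your sketch must stand on its own.

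As a sketch of the Bodlaender/Perkovi\'{c}--Reed argument, your outline has the right ingredients (matching contraction, recursion with geometric shrinkage, lifting, Bodlaender--Kloks refinement) but the order of operations is off in a way that creates a circularity. You begin by \emph{invoking} a linear-time constant-factor treewidth approximation to obtain a width-$O(w)$ decomposition $\mathcal{T}$ of $G$, and then use $\mathcal{T}$ to locate the friendly matching. But a linear-time constant-factor approximation for treewidth is precisely the content of the theorem you are proving; prior to Bodlaender the best bound was $O(n\log n)$. In the actual algorithm the approximate decomposition is the \emph{output} of the recursion, not its input: one first finds the matching $M$ (or, in the complementary case you omit, a large set of ``I-simplicial'' vertices) by purely local degree-counting arguments that require no decomposition at all, contracts $M$ to obtain $G_1$, recurses, lifts the resulting width-$w$ decomposition of $G_1$ to a width-$O(w)$ decomposition of $G$, and only then runs the Bodlaender--Kloks dynamic programming on that lifted decomposition to either compress to width $w$ or extract the witness $G'$.

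Two further points. First, your sketch handles only the large-matching branch; the algorithm also needs the branch where no large matching exists among low-degree vertices, in which case a constant fraction of the vertices are simplicial (or nearly so) in every width-$w$ decomposition and can be deleted outright before recursing. Second, when lifting the witness $G_1'$ you replace contracted vertices by their two preimages; this can push bag sizes from $2w+1$ up toward $4w$, so the claim that the lifted decomposition still has width at most $2w$ needs the Perkovi\'{c}--Reed bookkeeping rather than a ``careful bag-by-bag expansion'' alone.
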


\begin{theorem}[\cite{arn91}]
\label{thm_minortest}
For every fixed graph $H$ and integer $w$ there exists a linear-time algorithm that, given a graph $G$
and a tree decomposition for $G$ of width $w$, returns a minor of $G$ isomorphic to $H$, if this is possible.
\end{theorem}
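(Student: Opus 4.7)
The plan is to express the property ``$G$ contains $H$ as a minor'' by a monadic second-order (MSO) formula of constant size (depending only on $H$), and then to invoke Courcelle's theorem, which evaluates any MSO sentence $\varphi$ on a graph equipped with a tree decomposition of width $w$ in time $O(f(\varphi,w)\cdot n)$.

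First I would formalise the minor relation. The graph $H$ is a minor of $G$ iff there exist pairwise disjoint non-empty branch sets $V_1,\ldots,V_h \subseteq V(G)$, where $h=|V(H)|$, such that each induced subgraph $G[V_i]$ is connected and for every edge $v_iv_j \in E(H)$ some edge of $G$ has one endpoint in $V_i$ and the other in $V_j$. Since $h$ is a constant, the sets $V_i$ can be introduced by a constant number of monadic second-order set quantifiers. Disjointness, non-emptiness and the adjacency conditions are first-order statements about these sets. Connectedness of $G[V_i]$ is also MSO-expressible in the standard way: there is no partition of $V_i$ into two non-empty parts $A$ and $B$ with no $AB$-edge in $G$. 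Assembling the pieces gives a sentence $\varphi_H$ whose size depends only on $|H|$.

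Feeding $\varphi_H$ together with the supplied tree decomposition of width $w$ into Courcelle's algorithm — or rather into its effective form due to Arnborg, Lagergren and Seese, implemented as a bottom-up dynamic program along a nice tree decomposition in which each bag stores one bit per ``type'' describing how a hypothetical satisfying assignment intersects it — decides in linear time whether $H$ is a minor of $G$. To actually return a minor model rather than only a yes/no answer, I would augment the dynamic programming so that, whenever a type becomes realisable at a bag, it records back-pointers to the choices at the children that realised it. After the bottom-up pass succeeds, a single top-down traversal reads off the branch sets $V_1,\ldots,V_h$ and the witnessing cross-edges.

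The main technical point — and the step I expect to be most delicate — is keeping the annotations small enough to preserve the linear running time: one must store only $O(1)$-size pointers into the previously computed tables rather than the partial assignments themselves, and argue that the total size of the stored information is $O(n)$. Since $w$ and $H$ are fixed, the number of types per bag is a constant, so the dynamic program and the witness reconstruction both run in time $O(n)$, as required.
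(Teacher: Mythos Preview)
The paper does not prove this theorem at all; it is simply quoted as a known result from the cited reference~[arn91], which is precisely a paper on monadic second-order logic, tree automata, and forbidden minors on bounded-treewidth graphs. Your sketch --- encode ``$H$ is a minor'' by an MSO sentence with one set variable per branch set and then run a Courcelle-style bottom-up automaton on the given tree decomposition, with back-pointers for witness extraction --- is correct and is exactly the method underlying that reference, so there is nothing to contrast. (One minor bibliographic slip: the constructive/algorithmic MSO result you invoke is usually attributed to Arnborg, Lagergren and Seese; the paper's citation~[arn91] is the related Arnborg--Proskurowski--Seese paper. Either way, your argument stands, and in fact the present paper itself later uses the same machinery via Theorem~\ref{mso} in Phase~II.)
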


Now, we are ready to state our first lemma, which provides the key structures for the mechanism of our algorithm.
In this lemma, we focus on hexagonal grids instead of rectangular grids.
The reason for this is the well-known fact that if a graph of maximum degree three is a minor of another graph,
then it is also contained in it as a topological minor.
This property of the hexagonal grid will be very useful later on.

\begin{lemma}
\label{alg_b}
For every pair of fixed integers $r$ and $k$ there is a linear-time algorithm $\mathcal{B}$,
that, given an input graph $G$, does the following:
\begin{itemize}
\item
either produces a tree decomposition of $G$ of width $w(r,k)=24r-11+k$, or
\item
finds a subdivision of $H_r$ in $G$, or
\item
correctly concludes that $G \notin \textup{Apex}(k)$.
\end{itemize}
\end{lemma}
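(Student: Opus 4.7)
The plan is to run Theorem~\ref{thm_treedecomp} on $G$ with width parameter $w=24r-11+k$. In linear time this either returns a tree decomposition of $G$ of width at most $w$, which is exactly the first bullet, or produces a subgraph $G'\subseteq G$ with $\textup{tw}(G')>w$ together with a tree decomposition of $G'$ of width at most $2w=48r-22+2k$. We focus on the second case, where $G'$ comes equipped with a tree decomposition of bounded width.

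Now we invoke Theorem~\ref{thm_minortest} on $G'$ using the hexagonal grid $H_r$ as the fixed pattern graph; since the width of the given decomposition is bounded by a function of $r$ and $k$, this runs in linear time. If $H_r$ is returned as a minor of $G'$, we are done: the paper has already noted that $\Delta(H_r)=3$, so any graph that contains $H_r$ as a minor also contains a subdivision of $H_r$, and since $G'\subseteq G$, we obtain a subdivision of $H_r$ in $G$, which gives the second bullet. If the minor test fails, the algorithm outputs that $G\notin\textup{Apex}(k)$.

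The core of the argument is showing that this last conclusion is correct. Suppose for contradiction that $G\in\textup{Apex}(k)$, witnessed by a set $X\subseteq V(G)$ with $|X|\le k$ such that $G-X$ is planar. Then $G'-X$ is a planar subgraph of $G-X$. Since deleting one vertex reduces treewidth by at most one, $\textup{tw}(G'-X)\ge \textup{tw}(G')-k>24r-11=6(4r-1)-5$. Applying Theorem~\ref{thm_egt_planar} to the planar graph $G'-X$ yields the rectangular grid $G_{(4r-1)\times(4r-1)}$ as a minor of $G'-X$, and hence of $G'$. A routine embedding argument shows that $G_{(4r-1)\times(4r-1)}$ contains $H_r$ as a minor (one can lay out the hexagonal cells row by row on a rectangular grid whose side length grows linearly in $r$ with the constant $4$ sufficing), so $H_r$ is a minor of $G'$, contradicting the outcome of the minor test.

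The main obstacle is the purely combinatorial step of checking that $G_{(4r-1)\times(4r-1)}$ actually contains $H_r$ as a minor with precisely the constants claimed; this is a routine but delicate computation of the embedding. Once this is granted, the algorithmic pieces plug together directly from Theorems~\ref{thm_treedecomp}, \ref{thm_egt_planar}, and~\ref{thm_minortest}, and the whole procedure clearly runs in linear time for fixed $r$ and $k$.
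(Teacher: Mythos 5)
Your proposal is correct and follows essentially the same route as the paper: run the decomposition algorithm of Theorem~\ref{thm_treedecomp} with $w=w(r,k)$, and in the high-treewidth case use the $k$-vertex-deletion/treewidth argument together with Theorem~\ref{thm_egt_planar} to show that $H_r$ must occur as a (topological) minor of $G'$ whenever $G\in\textup{Apex}(k)$, then extract it with Theorem~\ref{thm_minortest}. The only cosmetic difference is that you run the minor test first and argue by contradiction, whereas the paper states the dichotomy first; the containment of $H_r$ in $G_{(4r-1)\times(4r-1)}$ that you flag as the remaining obstacle is exactly the routine subgraph embedding the paper also asserts without proof.
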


\begin{proof}
Let $r$ and $k$ be arbitrary fixed integers.
Let us run the algorithm provided by Theorem~\ref{thm_treedecomp} for $w=w(r,k)$ on graph $G$.
If it produces a tree decomposition of width $w(r,k)$ for $G$, then we output it.
Otherwise let $G'$ be the subgraph of $G$ with $\textup{tw}(G')>w(r,k)$ that has been provided
together with a tree decomposition $\mathcal{T}'$ for it having width at most $2w(r,k)$.

Clearly, if $G' \notin \textup{Apex}(k)$, then $G \notin \textup{Apex}(k)$ also holds as $G'$ is a subgraph of $G$.
On the other hand, if $G' \in \textup{Apex}(k)$, then there exists a set $S \subseteq V(G)$ with $|S| \leq k$
such that $G'-S$ is planar. Deleting a vertex of a graph can only decrease its treewidth by at most one,
so $\textup{tw}(G'-S) > w(r,k)-k=6(4r-1)-5$. Now, Theorem~\ref{thm_egt_planar} implies that
$G'-S$ contains $G_{(4r-1) \times (4r-1)}$ as a minor.
Since the hexagonal grid with radius $r$ is a subgraph of the $(4r-1) \times (4r-1)$ grid,
we get that $G'-S$ must also contain $H_r$ as a minor, and hence as a topological minor.

Thus, we get that either $G \notin \textup{Apex}(k)$, or $G'$ (and hence $G$)
contains $H_r$ as a (topological) minor. Now, using the algorithm of
Theorem~\ref{thm_minortest} for $G'$ and $\mathcal{T}'$,
we can find a subgraph of $G'$ isomorphic to a subdivision of $H_r$ in linear time, if possible.
If the algorithm produces such a subgraph, then we output it, otherwise we can correctly conclude
that $G \notin \textup{Apex}(k)$.
\qed
\end{proof}

In algorithm $\mathcal{A}$ we will run $\mathcal{B}$ several times.
As long as the result is a hexagonal grid of radius $r$ as topological minor,
we will run Phase I of algorithm $\mathcal{A}$, which compresses the graph $G$.
If at some step algorithm $\mathcal{B}$ gives us a tree
decomposition of width $w(r,k)$, we run Phase II.
(The constant $r$ will be fixed later.) And of course if at some step
$\mathcal{B}$ finds out that $G \notin \textup{Apex}(k)$, then
algorithm $\mathcal{A}$ can stop with the output ``No solution.''

Clearly, we can assume without loss of generality that the input graph is simple,
and it has at least $k+3$ vertices.
So if $G \in \textup{Apex}(k)$, then deleting $k$ vertices from $G$
(which means the deletion of at most $k(|V(G)|-1)$ edges) results in a planar graph,
which has at most $3|V(G)|-6$ edges.
Therefore, if $|E(G)|>(k+3)|V(G)|$ then surely $G \notin \textup{Apex}(k)$.
Since this can be detected in linear time, we can assume that $|E(G)|\leq (k+3)|V(G)|$.

\section{Phase I of Algorithm $\mathcal{A}$}
\label{phase_I}

In Phase I we assume that after running $\mathcal{B}$ on $G$ we
get a subgraph $H'_r$ that is a subdivision of $H_r$.
Our goal is to find a set of vertices $X$ such that $G-X$ is planar, and $|X| \leq k$.
Let $\textup{ApexSets}(G,k)$ denote the family of sets of vertices that have these
properties, i.e., let
$\textup{ApexSets}(G,k)=\{ X \subseteq V(G) \, | \, |X| \leq k$ and
$G-X$ is planar$\}$.
Since the case $k=1$ is very simple we can assume that $k > 1$.

\textbf{Reduction A: Flat zones.}
In the following we regard the grid $H'_r$ as a fixed subgraph of $G$.
Let us define $z$ zones in it. Here $z$ is a constant
depending only on $k$, which we will determine later.
A \emph{zone} is a subgraph of $H'_r$ which is topologically isomorphic
to the hexagonal grid $H_{2k+5}$. We place such zones next to each other in the well-known
radial manner with radius $q$, i.e., we replace each hexagon of $H_q$ with a subdivision of $H_{2k+5}$.
It is easy to show that in a hexagonal grid with radius ${(q-1)(4k+9)+(2k+5)}$
we can define this way $3q(q-1)+1$ zones that only intersect in their outer circles.
So let $r=(q-1)(4k+9)+(2k+5)$, where we choose $q$ big enough to get at least $z$ zones,
i.e., $q$ is the smallest integer such that $3q(q-1)+1 \geq z$.
Let the set of these innerly disjoint zones be $\mathcal{Z}$, and the
subgraph of these zones in $H'_r$ be $R$.

Let us define two types of \emph{grid-components}.
An edge which is not contained in $R$ is a grid-component if it connects
two vertices of $R$. A subgraph of $G$ is a grid-component if it is a (maximal) connected component of $G-R$.
A grid-component $K$ is \emph{attached} to a vertex $v$ of the grid $R$ if it has a vertex adjacent to $v$, or
(if $K$ is an edge) one of its endpoints is $v$.
The \emph{core} of a zone is the (unique) subgraph of the zone which is topologically
isomorphic to $H_{2k+3}$ and lies in the middle of the zone.
Let us call a zone $Z \in \mathcal{Z}$ \emph{open} if there is a vertex in its core that is connected to
a vertex $v$ of another zone in $\mathcal{Z}$, $v \notin V(Z)$, through a grid-component.
A zone is \emph{closed} if it is not open.

\begin{figure}[t]
\begin{center}
    \psfrag{a}{(a)}
    \psfrag{b}{(b)}
\includegraphics[scale=0.5]{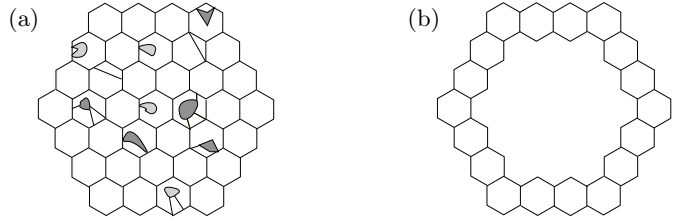}
\end{center}
\caption{(a) An induced subgraph of a flat zone, together with its grid components.
Among them, there are two edges, four edge-components (shown in light gray) and five cell-components (dark gray).
(b) The ring $R_3$ of $Z$.}
\label{fig_comp_ring}
\end{figure}

For a subgraph $H$ of $R$ we let  $T(H)$ denote the subgraph of $G$ induced by the vertices of $H$ and
the vertices of the grid-components which are only attached to $H$.
Let us call a zone $Z$ \emph{flat} if it is closed and $T(Z)$ is planar.
Let $Z$ be such a flat zone.
See Fig.~\ref{fig_comp_ring} (a) for an illustration of a flat zone together with its grid-components.
A grid-component is an \emph{edge-component} if it is
either only attached to one edge-path  of $Z$ or only to one vertex of $Z$.
Otherwise, it is a \emph{cell-component} if it is only attached to vertices of one cell.
As a consequence of the fact that all embeddings of a 3-connected graph are equivalent
(see e.g. \cite{die00}), and $Z$ is a subdivision of such a graph,
every grid-component attached to some vertex in the core of $Z$ must be one of these two types.
Note that we can assume that in an embedding of $T(Z)$ in the plane,
all edge-components are embedded in an arbitrarily small neighborhood of
the edge-path (or vertex) which they belong to.

Let us define the \emph{ring} $R_i$ ($1 \leq i \leq 2k+4$) as the union of those cells in $Z$
that have common vertices both with the $i$-th and the $(i+1)$-th concentrical circle of $Z$.
Let $R_0$ be the cell of $Z$ that lies in its center.
The zone $Z$ can be viewed as the union of $2k+5$ concentrical rings,
i.e., the union of the subgraphs $R_i$ for $0 \leq i \leq 2k+4$.
Figure~\ref{fig_comp_ring} (b) depicts the ring $R_3$.

\begin{lemma}
\label{equivalence}
Let $Z$ be a flat zone in $R$, and let $G'$ denote the graph $G-T(R_0)$.
Then $X \in \textup{ApexSets}(G',k)$ implies $X \in \textup{ApexSets}(G,k)$.
\end{lemma}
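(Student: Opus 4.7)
My plan is to show that any planar embedding of $G'-X$ can be turned into a planar embedding of $G-X$ by inserting $T(R_0)-X$ into a suitable face.

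First, I locate a \emph{buffer} cycle. The core of $Z$, being a subdivision of $H_{2k+3}$, contains $2k+2$ pairwise vertex-disjoint concentrical circles that all surround $R_0$ (the circles $C_2,\ldots,C_{2k+3}$, where $C_i$ is the $i$-th concentrical circle of $Z$; the innermost one, $C_1$, is excluded because it lies in $V(R_0)$ and has already been deleted from $G'$). Since $|X|\leq k$, at least $k+2$ of these $2k+2$ circles avoid $X$; pick any one and call it $C^*$. Then $C^*$ is a cycle of $G'-X$ enclosing the ``hole'' left behind by $R_0$.

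Second, I show that $V(C^*)$ is a vertex separator in $G'-X$ isolating the ``inner'' subgraph $Y$ --- i.e., the part of $T(Z)-T(R_0)-X$ lying strictly inside $C^*$ --- from the rest of $G'-X$. Two structural facts drive this. (i) The concentrical ring structure of $Z$ forbids any edge of $Z$ from joining a vertex strictly inside $C^*$ to a vertex strictly outside $C^*$ without using a vertex of $V(C^*)$. (ii) Because $Z$ is closed, every grid-component attached to a vertex of $Z$ in the core lies entirely in $T(Z)$, and the cell-component/edge-component classification confines each such grid-component to a single cell or to one edge-path, hence to at most two adjacent rings; therefore no grid-component can cross $V(C^*)$ either.

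Third, in any planar embedding $\sigma$ of $G'-X$, the cycle $C^*$ is a Jordan curve, and by the separation property $Y$ lies entirely in one of the two disks bounded by $C^*$; call that disk $D$. On the other hand, since $Z$ is flat, $T(Z)$ has a planar embedding whose restriction to the ``inside $C^*$'' region, after deleting $X$, is a planar embedding of $Y\cup(T(R_0)-X)$ in a disk bounded by $C^*$. I now replace the current embedding of $Y$ inside $D$ by this new embedding, leaving $\sigma$ unchanged outside of $D$. The resulting drawing is a planar embedding of $G-X$, as required.

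The main obstacle is to verify that the replacement is topologically consistent at the boundary $C^*$: at each vertex $v\in V(C^*)$, the cyclic order of incident edges splits into an ``outside'' arc, inherited from $\sigma$ and therefore fixed, and an ``inside'' arc, determined by the newly chosen embedding. These two arcs must glue together into a valid rotation at $v$. This is the step where I would invoke the 3-connectedness of the underlying graph $H_{2k+5}$ of the zone: by Whitney's theorem the planar embedding of the inner portion of $T(Z)-X$ is essentially unique (up to reflection), so after possibly reflecting the inserted piece the two cyclic orders around each $v\in V(C^*)$ can be made to agree.
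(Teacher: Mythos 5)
Your overall strategy (cut along an $X$-free curve inside the zone, clear the inside, and re-embed it using a planar embedding of $T(Z)$ supplied by flatness) is the same as the paper's, but your execution has a genuine gap at the replacement step, and it is exactly the difficulty the paper's proof is built to handle. Choosing $C^*$ so that only $C^*$ itself avoids $X$ gives you no control over what the given embedding $\sigma$ of $G'-X$ places inside the disk $D$. Your separation argument only shows that no edge joins a vertex structurally inside $C^*$ to one structurally outside except through $V(C^*)$; it does not show that $D$ contains exactly $Y$. Since $X$ may hit the rings on both sides of $C^*$ and $T(R_0)$ is gone, $Y$ can be disconnected, pieces of $Y$ attached only to $V(C^*)$ (or to nothing) may be drawn on either side of $C^*$, and conversely $D$ may contain parts of $G'-X$ that are not in $Y$: whole connected components of $G'-X$, edge-components attached to an edge-path of $C^*$, or fragments of the outer ring and its cell-components whose only surviving attachments lie on $C^*$ because their other attachment vertices belong to $X$. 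The operation ``replace the current embedding of $Y$ inside $D$, leaving $\sigma$ unchanged outside $D$'' then either erases such pieces (they do not appear in the inserted drawing of $Y\cup(T(R_0)-X)$) or draws vertices twice; and relocating a displaced piece that attaches to several vertices of $C^*$ requires a face of the final drawing containing all of its attachments in a compatible cyclic order, which neither side of $C^*$ is guaranteed to offer after the paste. This is precisely why the paper does not cut along a single $X$-free circle: it defines $W_i$ as the maximal subgraph that the embedding $\phi$ draws inside the region of the ring $R_i$, pigeonholes over the $2k+1$ rings to pick a ring whose entire embedded interior is under control, erases $Q_{i-1}$, and explicitly verifies that the attachment set $U$ lies on a single face both of the restricted $\phi$ and of the restricted $\theta$ before gluing.

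A second, smaller problem is the appeal to Whitney's theorem at the end: you apply it to (the inner portion of) $T(Z)-X$, but after deleting up to $k$ vertices of $X$ from the inner rings this graph need not be a subdivision of a $3$-connected graph, so its embedding is not essentially unique; and uniqueness of the inserted piece is not what the gluing needs anyway. What is needed is (a) that the inserted piece has a drawing in a closed disk with $C^*$ as its outer boundary --- this does follow from flatness via an embedding $\theta$ of the intact $T(Z)$, where $3$-connectedness of the zone is legitimately used --- and (b) that after clearing the interior of $C^*$ the curve bounds a face of the outer drawing whose boundary carries the attachment vertices in the correct order, which is again the unaddressed issue of the first paragraph. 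So the proposal needs the analogue of the paper's $W_i$/ring-selection argument before the cut-and-paste can go through.
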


\begin{proof}
Suppose $X \in \textup{ApexSets}(G',k)$.
Since $G-T(R_0)-X$ is planar, we can fix a planar embedding $\phi$ of it.
If $R_i \cap X = \emptyset$ for some $i$ ($2 \leq i \leq 2k+2$)
then let $W_i$ denote the maximal subgraph of $G-T(R_0)-X$ for
which $\phi(W_i)$ is in the region determined by $\phi(R_i)$ (including $R_i$).
If $R_i \cap X$ is not empty then let $W_i$ be the empty graph.
Note that if $2 \leq i \leq 2k$ then $W_i$ and $W_{i+2}$ are disjoint.
Therefore, there exists an index $i$ for which $W_i \cap X = \emptyset$
and $W_i$ is not empty.
Let us fix this $i$.

Let $Q_i$ denote $T(\bigcup_{j=0}^i R_j)$.
We prove the lemma by giving an embedding for $G-X'$ where
$X'=X \setminus V(Q_{i-1})$.
The region $\phi(R_i)$ divides the plane in two other regions.
As $Z$ is flat, vertices of $Q_{i-1}$ can only be adjacent to vertices of $Q_i$.
Thus we can assume that in the finite region only vertices of $Q_{i-1}$ are embedded, so
$G-X'-(Q_{i-1} \cup W_i)$ is entirely embedded in the infinite region.
Let $U$ denote those vertices in $Q_{i-1}$ which are adjacent
to some vertex in $G-Q_{i-1}$.
Observe that the vertices of $U$ lie on the $i$-th concentrical circle of $Z$, hence,
the restriction of $\phi$ to $G-X'-(Q_{i-1}-U)$ has a face whose boundary contains $U$.

Now let $\theta$ be a planar embedding of $T(Z)$, and let us restrict $\theta$ to $Q_{i-1}$.
Note that $U$ only contains vertices which are either
adjacent to some vertex in $R_i$ or are adjacent to cell-components
belonging to a cell of $R_i$.
But $\theta$ embeds $R_i$ and its cell-components also, and therefore the restriction of
$\theta$ to $Q_{i-1}$ results in a face whose boundary contains $U$. Here we used also
that $R_i$ is a subdivision of a 3-connected graph whose embeddings are equivalent.

Now it is easy to see that we can combine $\theta$ and $\phi$ in such a way
that we embed $G-X'-(Q_{i-1}-U)$ according to $\phi$
and, similarly, $Q_{i-1}$ according to $\theta$, and then ``connect'' them by identifying
$\phi(u)$ and $\theta(u)$ for all $u \in U$. This gives the desired embedding of
$G-X'$. Finally, we have to observe that
$X' \in \textup{ApexSets}(G,k)$ implies
$X \in \textup{ApexSets}(G,k)$, since $X' \subseteq X$ and $|X| \leq k$.
\qed
\end{proof}

This lemma has a trivial but crucial consequence:
$X \in \textup{ApexSets}(G,k)$ if and only if
$X \in \textup{ApexSets}(G-T(R_0),k)$,
so deleting $T(R_0)$ reduces our problem to an equivalent instance.
Let us denote this deletion as \emph{Reduction A}.

Note that the closedness of a zone $Z$ can be decided
by a simple breadth first search, which can also produce the graph $T(Z)$.
Planarity can also be tested in linear time \cite{hop74}.
Therefore we can test whether a zone is flat, and if so, we can apply Reduction A on it in linear time.

Later we will see that unless there are some easily recognizable vertices in our graph
which must be included in every solution, a flat zone can always be found (Lemma \ref{zones}).
This yields an easy way to handle graphs with large treewidth:
compressing our graph by repeatedly applying Reduction A we can reduce the problem
to an instance with bounded treewidth.

\textbf{Reduction B: Well-attached vertices.}
A subgraph of $R$ is a \emph{block} if it is topologically isomorphic to $H_{k+3}$.
A vertex of a given block is called \emph{inner vertex}
if it is not on the outer circle of the block.

\begin{figure}[t]
\begin{center}
    \psfrag{Bx}{$B_x$}
    \psfrag{By}{$B_y$}
    \psfrag{Cx}{$C_x$}
    \psfrag{Cy}{$C_y$}
    \psfrag{x}{$x$}
    \psfrag{y}{$y$}
    \psfrag{ax}{$a_x$}
    \psfrag{ay}{$a_y$}
    \psfrag{P}{$P$}
    \psfrag{P'}{$P'$}
\includegraphics[scale=0.5]{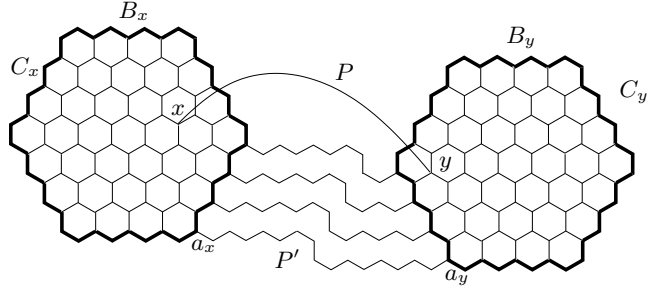}
\end{center}
\caption{Illustration for Lemma~\ref{blocks}. The edges of $C_x$ and $C_y$ are shown in bold.}
\label{fig_blocks}
\end{figure}

\begin{lemma}
\label{blocks}
Let $X \in \textup{ApexSets}(G,k)$.
Let $x$ and $y$ be inner vertices of the disjoint blocks $B_x$ and $B_y$, respectively.
If $P$ is an $x-y$ path that (except for its endpoints) doesn't contain any vertex from $B_x$ or $B_y$, then
$X$ must contain a vertex from $B_x$, $B_y$ or $P$.
\end{lemma}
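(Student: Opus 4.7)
The plan is to argue by contradiction: suppose $X \in \textup{ApexSets}(G, k)$ is disjoint from $V(B_x) \cup V(B_y) \cup V(P)$, and exhibit a subdivision of $K_{3,3}$ (or $K_5$) inside $G - X$ to contradict planarity. Note that since $k > 1$, the graph $H_{k+3}$ is 3-connected and planar, so the topologically isomorphic blocks $B_x$ and $B_y$ inherit this rigid structure.

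The first ingredient I would extract is the internal connectivity of the blocks: by Menger's theorem applied to $B_x$, there are three internally vertex-disjoint paths from $x$ to three distinct vertices on the outer cycle $C_x$ (symmetrically for $y$ and $C_y$), and together with the arcs of $C_x$ these assemble a $K_4$-subdivision in $B_x$ rooted at $x$. Let $a_x \in C_x$ and $a_y \in C_y$ denote designated target vertices of two of these paths.

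The crucial second step is to produce an auxiliary path $P'$ in $G - X$ from $a_x$ to $a_y$ that is internally vertex-disjoint from $V(B_x) \cup V(B_y) \cup V(P)$ — this is the path $P'$ shown in the figure. I expect this to be the main obstacle of the proof. The idea is to exploit that $B_x$ and $B_y$ are blocks inside the large hexagonal grid $R$, whose ambient structure supplies many pairwise vertex-disjoint connections between any two disjoint blocks, running outside their interiors. Combined with the bound $|X| \leq k$ and suitable routing, at least one such connection survives the removal of $X$ and of $V(P)$, giving the desired $P'$.

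Finally, with $P$ (from $x$ to $y$), $P'$ (from $a_x$ to $a_y$), the block-internal paths from $x$ and $y$ to their outer cycles, and arcs of $C_x$ and $C_y$, I would assemble a $K_{3,3}$-subdivision entirely inside $G - X$. The branch vertices on one side consist of $\{x, a_x\}$ together with a third vertex on $C_x$, and on the other side of $\{y, a_y\}$ together with a third vertex on $C_y$; the nine required subdivision paths are built by concatenating segments of $P$, of $P'$, of the internal paths inside the blocks, and of the arcs of $C_x$ and $C_y$, chosen so that no two share internal vertices and no branch vertex is traversed internally. This subdivision in $G - X$ contradicts planarity, yielding the claim. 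The technically delicate point remains the second step: establishing $P'$ using only the ambient connectivity of $R$ and the bound on $|X|$.
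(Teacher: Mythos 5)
Your plan breaks at two concrete points. First, in your second step you require a path $P'$ between the blocks that avoids not only $X$ but also $V(P)$, and you propose to obtain it by counting disjoint block-to-block connections against the bound $|X|\le k$. That count cannot give what you need: $P$ is an arbitrary path with unboundedly many vertices, and it may meet every one of the (roughly $k+3$) pairwise disjoint paths that the grid $R$ supplies between the outer circles $C_x$ and $C_y$; deleting $V(P)$ is not a bounded-size deletion, so no connection need survive it. The paper's proof never needs $P'$ to avoid $P$: it fixes $k+3$ disjoint $C_x$--$C_y$ paths whose endpoints lie on distinct cells of $B_x$, notes that at least three survive the deletion of $X$, and---since $x$ lies on the boundary of at most two cells meeting $C_x$---selects a surviving path $P'$ whose endpoint $a_x$ shares no cell of $B_x$ with $x$. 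This ``no common cell'' property, which your sketch never secures, is what drives the contradiction; if $a_x$ lay on a cell incident to $x$, the whole configuration $B_x\cup B_y\cup P\cup P'$ could be planar (embed $P$, $B_y$, $P'$ inside that cell), so no Kuratowski subgraph of any shape could be extracted.

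Second, the $K_{3,3}$ you want to assemble, with branch vertices $\{x,a_x,b_x\}$ in $B_x$ and $\{y,a_y,b_y\}$ in $B_y$, is not realizable from the guaranteed material: each of the nine subdivision paths joins a vertex of $B_x$ to a vertex of $B_y$, hence contains a segment whose interior lies outside $B_x\cup B_y$ and which connects the two blocks, and these nine segments must be pairwise internally disjoint. In $G-X$ you are only guaranteed a handful of such external connections ($P$, $P'$, and the other surviving grid paths), nowhere near nine. The paper sidesteps Kuratowski constructions entirely: assuming $B_x\cup B_y\cup P$ survives in $G-X$, it restricts a planar embedding of $G-X$ to $B_x$, a subdivision of a 3-connected planar graph whose embedding is unique, so every face at the inner vertex $x$ corresponds to a cell; $P$ forces $y$, hence all of $B_y$ and in particular $a_y$, into one cell region $C_F$ incident to $x$, and then $P'$ (whose interior avoids $B_x$) forces $a_x$ onto $C_F$ as well, contradicting the choice of $a_x$---and this argument is indifferent to whether $P'$ meets $P$. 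If you insist on an explicit Kuratowski witness, the right shape is ``$B_x$ plus a single external $x$--$a_x$ connection routed through $P$, $B_y$ and $P'$'' (rerouting along $P$ if $P'$ happens to meet it), which again hinges on the no-common-cell choice of $a_x$, not on nine disjoint crossings.
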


\begin{proof}
See Fig.~\ref{fig_blocks} for the illustration of this proof.
Let $C_x$ and $C_y$ denote the outer circle of $B_x$ and $B_y$, respectively.
Let us notice that since $B_x$ and $B_y$ are disjoint blocks, there exist at least
$k+3$ vertex disjoint paths between their outer circles, which---apart from their endpoints---do
 not contain vertices from $B_x$ and $B_y$.
Moreover, it is easy to see that these paths can be defined in a way such that
their endpoints that lie on $C_x$ are on the border of different cells of $B_x$.
To see this, note that the number of cells which lie on the border of a given block is $6k+12$.
At least three of these paths must be in $G-X$ also.
Since $x$ can lie only on the border of at most two cells having common vertices with $C_x$, we
get that there is a path $P'$ in $G-X$ whose endpoints are $a_x$ and $a_y$
(lying on $C_x$ and $C_y$, resp.), and there exist no cell of $B_x$ whose border
contains both $a_x$ and $x$.

Let us suppose that $B_x \cup B_y \cup P$ is a subgraph of $G-X$.
Since all embeddings of a 3-connected planar graph are equivalent,
we know that if we restrict an
arbitrary planar embedding of $G-X$ to $B_x$, then all faces having $x$ on their border
correspond to a cell in $B_x$. Since $x$ and $y$ are connected through $P$
and $V(P) \cap V(B_x)= \{x\}$, we get that $y$ must be embedded in a region $F$ corresponding to a cell
$C_F$ of $B_x$. But this implies that $B_y$ must entirely be embedded also in $F$.

Since $V(P'-a_x-a_y) \cap V(B_x) =\emptyset$ and $P'$ connects $a_x \in V(B_x)$
and $a_y \in V(B_y)$ we have that $a_y$ must lie on the border of $F$.
But then $C_F$ is a cell of $B_x$ containing both $a_x$ and $x$ on its border,
which yields the contradiction.
\qed
\end{proof}

Using this lemma we can identify certain vertices that have to be deleted.
Let $x$ be a \emph{well-attached} vertex in $G$ if there exist
paths $P_1, P_2, \dots, P_{k+2}$ and disjoint blocks $B_1, B_2, \dots, B_{k+2}$ such that
$P_i$ connects $x$ with an inner vertex of $B_i$ ($1 \leq i \leq k+2$),
the inner vertices of $P_i$ are not in $R$,
and if $i \neq j$ then the only common vertex of $P_i$ and $P_j$ is $x$.

\begin{figure}[t]
\begin{center}
    \psfrag{x}{$x$}
    \psfrag{B1}{$B_1$}
    \psfrag{B2}{$B_2$}
    \psfrag{Bk+2}{$B_{k+2}$}
    \psfrag{P1}{$P_1$}
    \psfrag{P2}{$P_2$}
    \psfrag{Pk+2}{$P_{k+2}$}
\includegraphics[scale=0.35]{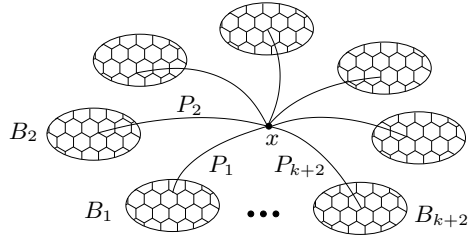}
\end{center}
\caption{A well-attached vertex.}
\label{fig_wellattach}
\end{figure}

\begin{lemma}
\label{well-attached}
Let $X \in \textup{ApexSets}(G,k)$.
If $x$ is well-attached then $x \in X$.
\end{lemma}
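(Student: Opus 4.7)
The plan is to prove the contrapositive: assuming $x \notin X$ for some $X \in \textup{ApexSets}(G,k)$, I will derive $|X| \geq k+1$, contradicting $|X| \leq k$. The argument combines Lemma~\ref{blocks} with a counting argument over the $k+2$ pairwise disjoint blocks $B_1,\dots,B_{k+2}$ and the $k+2$ internally disjoint paths $P_1,\dots,P_{k+2}$ coming from $x$.

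First I would introduce two index sets: $A=\{\,i : V(B_i)\cap X \neq \emptyset\,\}$ and $B=\{\,i : (V(P_i)\setminus\{x\})\cap X \neq \emptyset\,\}$. For every pair $i\neq j$ with $i,j\notin A$, the concatenation $P_i\cup P_j$ is a path joining an inner vertex of $B_i$ with an inner vertex of $B_j$ whose internal vertices avoid both blocks (its only block-vertices are its endpoints). Lemma~\ref{blocks} then forces $X$ to meet $B_i$, $B_j$, or $P_i\cup P_j$; since $i,j\notin A$ and $x\notin X$, the hit must lie in $(V(P_i)\setminus\{x\})\cup (V(P_j)\setminus\{x\})$, so at least one of $i,j$ lies in $B$. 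Consequently at most one index can lie outside $A\cup B$, giving $|A\cup B|\geq k+1$.

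Second, I would construct an injection from $A\cup B$ into $X$. For $i\in A$ pick a vertex $a_i\in V(B_i)\cap X$; these are distinct because the blocks $B_i$ are pairwise disjoint. For $i\in B\setminus A$ pick a vertex $b_i\in (V(P_i)\setminus\{x\})\cap X$; these are distinct because the paths $P_i$ share only $x$, and the endpoints in the respective $B_i$ are distinct. To show no $a_i$ equals any $b_j$, observe that $b_j$ is either an internal vertex of $P_j$, which by the well-attached definition does not belong to $R$ (hence to no $B_i$), or the endpoint of $P_j$ lying in $B_j$, in which case $j\neq i$ forces $b_j\in B_j$ to differ from $a_i\in B_i$ by disjointness of blocks. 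Thus $|X|\geq |A\cup B|\geq k+1$, the desired contradiction.

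The main obstacle I anticipate is precisely this last bookkeeping step, where several features of the definition of well-attachedness must be used together: internal disjointness of the paths, pairwise disjointness of the blocks, and—crucially—the fact that internal vertices of the $P_i$ lie outside $R$. Without this last property, an inner vertex of some $P_j$ could coincide with a block-vertex counted for some $i\in A$, and the injection would collapse.
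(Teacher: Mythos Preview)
Your argument is correct and uses the same core idea as the paper: a pigeonhole on the $k+2$ structures $B_i\cup P_i$ followed by an application of Lemma~\ref{blocks}. The paper's proof is simply the streamlined version of yours: rather than splitting into index sets $A$ and $B$ and building an injection, it observes directly that the sets $(V(B_i)\cup V(P_i))\setminus\{x\}$ are pairwise disjoint (by exactly the properties you list in your final paragraph), so $X$ with $|X|\le k$ misses at least two of them entirely, and Lemma~\ref{blocks} applied to those two indices gives the contradiction in one step.
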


\begin{proof}
If $x \notin X$, then after deleting $X$ from $G$ (which means deleting at most $k$ vertices)
there would exist indices $i \neq j$ such that no vertex from $P_i$, $P_j$, $B_i$, and $B_j$ was deleted.
But then the disjoint blocks $B_i$ and $B_j$ were connected by the path $P_i-x-P_j$,
and by the previous lemma, this is a contradiction.
\qed
\end{proof}

We can decide whether a vertex $v$ is well-attached
in time $O(f'(k)e)$ using standard flow techniques, where $e=|E(G)|$.
This can be done by simply testing for each possible set of
$k+2$ disjoint blocks whether there exist the required disjoint paths that
lead from $x$ to these blocks.
Since the number of blocks in $R$ depends only on $k$, and we can find $p$ disjoint paths
starting from a given vertex of a graph $G$
in time $O(p|E(G)|)$, we can observe that this can be done indeed in time $O(f'(k)e)$.

\textbf{Finding flat zones.}
Now we show that if there are no well-attached vertices in the graph $G$,
then a flat zone exists in our grid.

\begin{lemma}
\label{comp_block}
Let $X \in \textup{ApexSets}(G,k)$, and let $G$ not include any well-attached vertices.
If $K$ is a grid-component, then there cannot exist $(k+1)^2$ disjoint blocks such that
$K$ is attached to an inner vertex of each block.
\end{lemma}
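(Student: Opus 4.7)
The plan is to proceed by contradiction: assume $K$ is attached to inner vertices of pairwise disjoint blocks $B_1,\ldots,B_{(k+1)^2}$. For each $B_i$ I would fix an attachment pair $(b_i,v_i)$ with $b_i\in V(B_i)$ an inner vertex adjacent to $v_i\in V(K)$. Note that $K$ cannot be the single-edge type of grid-component, since such a $K$ is attached to only two vertices of $R$ while $(k+1)^2\geq 9$ attachments are needed; thus $V(K)\cap V(R)=\emptyset$. Let $K^+$ be the connected subgraph of $G$ obtained from $K$ by adjoining each $b_i$ together with the edge $v_ib_i$.

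Next I would count surviving blocks: because the $B_i$ are pairwise disjoint and $|X|\leq k$, at most $k$ of them can contain a vertex of $X$, so at least $(k+1)^2-k=k^2+k+1$ of them are \emph{surviving} and have $b_i\notin X$. Using Lemma~\ref{blocks} I would then show that the surviving $b_i$ lie pairwise in distinct components of $K^+-X$: otherwise a path within $K^+-X$ connecting two such vertices $b_i,b_j$ would have all internal vertices in $V(K)$, hence be disjoint from $V(B_i)\cup V(B_j)$, and would avoid $X$ entirely (since the $b_i,b_j$ are endpoints and the $B_i,B_j$ are surviving), contradicting Lemma~\ref{blocks}. This yields a lower bound of $k^2+k+1$ on the number of components of $K^+-X$ that contain a surviving $b_i$.

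The heart of the proof is a matching upper bound of $1+k^2$, obtained from the no-well-attached hypothesis by an iterative deletion argument. I would remove the at most $k$ vertices of $X\cap V(K^+)\subseteq V(K)$ from $K^+$ one at a time. When a vertex $v$ is deleted from a current connected piece $C$, each component of $C-v$ contains a neighbor of $v$; selecting, for each such component that still harbours some surviving $b_j$, a path from $v$ through that component to $b_j$, I obtain internally vertex-disjoint paths in $G$ from $v$ to inner vertices of pairwise distinct blocks, all with internal vertices lying in $V(K)$ and thus outside $R$. If the number $c^*$ of such components were at least $k+2$, the vertex $v$ would be well-attached, contrary to hypothesis; hence $c^*\leq k+1$, and the deletion of $v$ raises the count of surviving-containing components by at most $c^*-1\leq k$ (and by $0$ if $C$ contained no surviving $b_j$). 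Starting from the single component $K^+$ and performing at most $k$ deletions thus ends with at most $1+k^2$ surviving-containing components, strictly less than the lower bound $k^2+k+1$; this contradiction finishes the proof.

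The main obstacle will be the careful verification of the iterative step: one must ensure that the chosen paths through the current piece of $K^+$ really certify well-attachedness in $G$, i.e.\ that internal-vertex-disjointness is preserved globally, that all internal vertices avoid $R$, and that the endpoints belong to inner vertices of pairwise distinct blocks so that the hypothesis on well-attached vertices can actually be invoked at each deletion step.
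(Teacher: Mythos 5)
Your proof is correct and takes essentially the same route as the paper's: both use Lemma~\ref{blocks} plus planarity to force the attachment points of distinct surviving blocks into distinct components of $K$ minus $X$, and both use the absence of well-attached vertices to bound by $k+1$ the number of such components any vertex of $X\cap V(K)$ can border, yielding the same counting contradiction. The only differences are cosmetic --- you organize the count as an iterative deletion (bound $1+k^2$) where the paper charges each component of $K-X$ to a neighbour in $K\cap X$ (bound $k(k+1)$), and your parenthetical claim $X\cap V(K^+)\subseteq V(K)$ is not literally true (a non-surviving $b_i$ could lie in $X$) but harmless, since such degree-one vertices never increase the component count.
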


\begin{proof}
Let us assume for contradiction that there exist $(k+1)^2$ such blocks.
Since $|X|\leq k$, at least $(k+1)^2-k$ of these blocks do not contain any vertex of $X$.
So let $x_1$, $x_2$, \dots $x_{(k+1)^2-k}$ be adjacent to $K$ and let $B_1, B_2, \dots, B_{(k+1)^2-k}$
be disjoint blocks of $G-X$ such that $x_i$ is an inner vertex of $B_i$.

Since $G-X$ is planar, it follows from Lemma \ref{blocks}
that a component of $K-X$ cannot be adjacent to different vertices from $\{x_i | 1 \leq i \leq (k+1)^2-k \}$.
So let $K_i$ be the connected component of $K-X$ that is attached to $x_i$ in $G-X$.
$K$ is connected in G, hence
for every $K_i$ there is a vertex of $T=K \cap X$ that is adjacent to it in $G$.
Since there are no well-attached vertices in $G$,
every vertex of $T$ can be adjacent to at most $k+1$ of these subgraphs.
But then $|T| \geq ((k+1)^2-k)/(k+1) > k$ which is a contradiction since $T \subseteq X$.
\qed
\end{proof}

Let us now fix the constant $d=(k+1)((k+1)^2-1)$.

\begin{lemma}
\label{comp_block2}
Let $X \in \textup{ApexSets}(G,k)$, let $G$ not include any well-attached vertices,
and let $x$ be a vertex of the grid $R$.
Then there cannot exist $B_1, B_2, \dots, B_{d+1}$ disjoint blocks such that for all $i$
$( 1 \leq i \leq d+1)$ an inner vertex of $B_i$ and $x$ are both attached to
some grid-component $K_i$.
\end{lemma}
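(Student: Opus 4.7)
The plan is to argue by contradiction and show that the assumed configuration would force $x$ to satisfy the definition of a well-attached vertex, contradicting the standing hypothesis. So suppose that disjoint blocks $B_1,\dots,B_{d+1}$ and grid-components $K_1,\dots,K_{d+1}$ (not necessarily distinct) exist, with $K_i$ attached both to $x$ and to an inner vertex $v_i$ of $B_i$.

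The first step is a pigeonhole count on the multiset $\{K_1,\dots,K_{d+1}\}$. By Lemma~\ref{comp_block}, any single grid-component $K$ can be attached to inner vertices of at most $(k+1)^2-1$ pairwise disjoint blocks; hence the number of indices $i$ for which $K_i$ equals a given component is at most $(k+1)^2-1$. Since $d+1=(k+1)((k+1)^2-1)+1$, at least
\[
\left\lceil\frac{d+1}{(k+1)^2-1}\right\rceil = k+2
\]
\emph{distinct} grid-components appear; call them $K_{j_1},\dots,K_{j_{k+2}}$, and let $B_{j_1},\dots,B_{j_{k+2}}$ be the corresponding (pairwise disjoint) blocks.

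Next I would build, for each $\ell\in\{1,\dots,k+2\}$, a path $P_\ell$ from $x$ to the inner vertex $v_{j_\ell}\in B_{j_\ell}$ whose interior lies entirely in $K_{j_\ell}$. If $K_{j_\ell}$ is an edge-type grid-component, $P_\ell$ is just that edge; otherwise $K_{j_\ell}$ is a connected component of $G-R$, and one concatenates the edge from $x$ into $K_{j_\ell}$, a path inside $K_{j_\ell}$, and the edge out to $v_{j_\ell}$. In both cases the inner vertices of $P_\ell$ avoid $R$.

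The punchline is that the paths $P_1,\dots,P_{k+2}$ pairwise meet only at $x$: their interiors sit inside the vertex-disjoint components $K_{j_\ell}$, and their far endpoints sit inside the vertex-disjoint blocks $B_{j_\ell}\subseteq R$, so no inner vertex of one path can coincide with an endpoint or interior vertex of another. This is exactly the certificate in the definition of a well-attached vertex, so $x$ would be well-attached in $G$, contradicting our assumption. The only real thing to verify carefully is the pigeonhole arithmetic above, which is the reason $d$ was defined as $(k+1)((k+1)^2-1)$; everything else is bookkeeping about paths in grid-components.
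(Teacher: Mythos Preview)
Your argument is correct and is essentially the contrapositive of the paper's own proof: the paper argues directly that, since $x$ is not well-attached, at most $k+1$ distinct grid-components can occur among the $K_i$, each contributing at most $(k+1)^2-1$ disjoint blocks, giving the bound $d$; you run the same two ingredients in reverse via pigeonhole and then spell out the path construction that witnesses well-attachedness. The extra detail you provide about building the $P_\ell$ and checking their pairwise intersection is exactly what the paper leaves implicit in the phrase ``since $x$ is not a well-attached vertex, there can be only at most $k+1$ different grid-components,'' so the two proofs coincide.
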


\begin{proof}
As a consequence of Lemma \ref{comp_block}, each of the grid-components $K_i$
can be attached to at most $(k+1)^2-1$ disjoint blocks.
But since $x$ is not a well-attached vertex, there can be only at most
$k+1$ different grid-components among the grid-components $K_i$, $1 \leq i \leq d+1$.
So the total number of disjoint blocks that are
attached to $x$ through a grid-component is at most $(k+1)((k+1)^2-1)=d$.
\qed
\end{proof}

\begin{lemma}
\label{zones}
Let $X \in \textup{ApexSets}(G,k)$, and let $G$ not include any well-attached vertices.
Then there exists a flat zone $Z$ in $G$.
\end{lemma}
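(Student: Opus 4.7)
The strategy is proof by contradiction: assume every zone in $\mathcal{Z}$ is non-flat, and bound the total number of such zones by a function of $k$. Choosing $z$ to exceed this bound then guarantees a flat zone. Every non-flat zone $Z$ falls into at least one of two categories: either $Z$ is open (type~(i)), or $Z$ is closed but $T(Z)$ is non-planar (type~(ii)). I would handle the two cases separately.

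For type~(ii): because $G-X$ is planar so is $T(Z)-X$, hence $T(Z)$ non-planar forces $X\cap V(T(Z))\neq\emptyset$. Any single vertex $v\in X$ belongs to $V(T(Z))$ for only a bounded number of zones -- if $v\in V(R)$ then $v$ is shared by a constant number of zones (since distinct zones intersect only along outer circles, so only finitely many meet at a common boundary vertex in the radial layout), and if $v\notin V(R)$ then by definition of $T(\cdot)$ the grid-component containing $v$ is attached to only one zone. Hence there are at most $O(k)$ type-(ii) zones.

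For type~(i), which I expect to be the main obstacle: for each open zone $Z$, fix a witness triple $(x_Z,K_Z,v_Z)$ with $x_Z$ a core vertex of $Z$ and $K_Z$ a grid-component attaching $x_Z$ to some $v_Z\in V(Z')\setminus V(Z)$. Since each core is a subdivision of $H_{2k+3}$ while a block is only $H_{k+3}$, I can select inside the core of $Z$ a block $B_Z$ containing $x_Z$ as an inner vertex; the blocks $B_Z$ are pairwise disjoint because distinct cores are disjoint. Lemma~\ref{comp_block} then bounds by $(k+1)^2-1$ the number of open zones for which $K_Z$ equals a fixed grid-component $K$, since otherwise $K$ would attach to inner vertices of at least $(k+1)^2$ pairwise disjoint blocks. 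To cap the number of distinct witness grid-components, I would invoke Lemma~\ref{comp_block2} at the core vertices $x_Z$: for any fixed $x$, the number of disjoint blocks reachable from $x$ via grid-components is at most $d$, which in particular caps how many other zones' cores a single core vertex can ``see'' through distinct grid-components. Chaining these two bounds yields a polynomial-in-$k$ upper bound $g(k)$ on the number of type-(i) zones.

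The delicate point in the type-(i) argument is that the outside endpoint $v_Z$ need not lie in the core of $Z'$; it could sit on an outer circle shared with a third zone, where Lemma~\ref{comp_block2} does not directly pin it down as an inner block vertex. I would handle this either by promoting $v_Z$ along a short path inside $Z'$ to reach an inner vertex of a block in $Z'$'s core (the hexagonal grid structure of $Z'$ offers ample room), or by charging those witnesses that genuinely land on shared outer circles to the constantly many zone interfaces of the radial layout. Combining all estimates, the total number of non-flat zones is at most $g(k)+O(k)$, so fixing $z$ above this threshold from the outset produces the desired flat zone and completes the proof.
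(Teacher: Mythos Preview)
Your type-(ii) analysis is fine, but the type-(i) argument has a real gap. You correctly observe via Lemma~\ref{comp_block} that at most $(k+1)^2-1$ open zones can share the same witness component $K$. The trouble is the next step: you propose to bound the number of \emph{distinct} witness components $K_Z$ by invoking Lemma~\ref{comp_block2} ``at the core vertices $x_Z$''. That lemma is a \emph{per-vertex} statement: for one fixed grid vertex $x$, it caps the number of disjoint blocks reachable from $x$ through grid-components. It says nothing about how many different grid-components occur globally. Since each open zone contributes its own $x_Z$, and the number of such $x_Z$'s is exactly the quantity you are trying to bound, the chaining is circular. Nothing in your outline ever invokes the set $X$ to control open zones, and without $X$ there is no finiteness to exploit---a priori there can be unboundedly many grid-components in $G$.

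The missing idea is to let $X$ do the partitioning. For each open zone $Z$ with witness $(x_Z,K_Z,v_Z)$, Lemma~\ref{blocks} forces $X$ to hit $B_{x_Z}$, or $K_Z$, or a block around $v_Z$. This yields three subfamilies: zones whose interior meets $X$ (at most $k$ of them), zones whose witness component meets $X$ (each vertex of $X$ lies in one grid-component, and Lemma~\ref{comp_block} bounds the zones that component can reach), and a residual family $\mathcal{Z}_3$ where $X$ hits neither. For $\mathcal{Z}_3$ the paper uses an argument you did not anticipate: one shows that the outside endpoints $u(Z)$ for distinct $Z\in\mathcal{Z}_3$ must lie in distinct connected components of $R-X$ (otherwise a path in $R-X$ joining two of them, rerouted around the relevant zones, contradicts Lemma~\ref{blocks}). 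Since $R$ has maximum degree~$3$, deleting $k$ vertices creates only $O(k)$ components, which finishes the bound. Your ``delicate point'' paragraph gestures at endpoint location but does not supply this component-counting idea, which is the crux.
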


\begin{proof}
Let $Z \in \mathcal{Z}$ be an open zone which has a vertex $w$ in its core that is attached to a vertex
$v$ of another zone in $\mathcal{Z}$ ($v \notin V(Z)$) through a grid-component $K$.
By the choice of the size of the zones and their cores, we have disjoint blocks $B_w$ and $B_v$
containing $w$ and $v$ respectively as inner points.
We can also assume that $B_w$ is a subgraph of $Z$
which does not intersect the outer circle of $Z$.

By Lemma \ref{blocks} we know that $B_w$, $B_v$ or $K$ contains a vertex from $X$.
Let $\mathcal{Z}_1$ denote the set of zones in $\mathcal{Z}$
with an inner vertex in $X$,
let $\mathcal{Z}_2$ denote the set of open zones in $\mathcal{Z}$ with a
core vertex to which a grid-component, having a common vertex with $X$, is attached,
and finally let $\mathcal{Z}_3$ be the set of the remaining open zones in $\mathcal{Z}$.
Since $|X| \leq k$ and a grid-component can be attached to inner vertices of at most
$(k+1)^2$  disjoint blocks by Lemma \ref{comp_block},
we have that $|\mathcal{Z}_1|\leq k$ and $|\mathcal{Z}_2|\leq k(k+1)^2$.

Let us count the number of zones in $\mathcal{Z}_3$.
To each zone $Z$ in $\mathcal{Z}_3$ we assign a vertex $u(Z)$ of the grid not in $Z$,
which is connected to the core of $Z$ by a grid-component.
First, let us bound the number of zones $Z$ in $\mathcal{Z}_3$ for which $u(Z) \in X$.
Lemma \ref{comp_block2} implies that any $v \in X$ can be connected this way to at most $d$ zones, so
we can have only at most $kd$ such zones.

Now let $U= \{ v \ | \ v=u(Z), Z \in \mathcal{Z}_3 \}$.
Let $a$ and $b$ be different members of $U$, and let $a$ be connected through the grid-component $K_a$
with the core vertex $z_a$ of $Z_a \in \mathcal{Z}_3$.
Let $B_a$ denote a block which only contains vertices that are inner vertices of $Z_a$,
and contains $z_a$ as inner vertex. Such a block can be given due to the size of a zone and its core.
Let us define $K_b$, $z_b$, $Z_b$, and $B_b$ similarly.
Note that $V(B_a) \cap X = V(B_b) \cap X= \emptyset$ by $Z_a,Z_b \notin \mathcal{Z}_1$.

Now let us assume that $a$ and $b$ are in the same component of $R-X$. Let $P$ be a path
connecting them in $R-X$. If $P$ has common vertices with $B_a$ (or $B_b$) then we modify $P$
the following way. If the first and last vertices reached by $P$ in $Z_a$ (or $Z_b$, resp.)
are $w$ and $w'$, then we swap the $w - w'$ section of $P$
using the outer circle of $Z_a$ (or $Z_b$, resp.).
This way we can fix a path in $R-X$ that connects $a$ and $b$, and does not include any vertex
from $B_a$ and $B_b$. But this path together with $K_a$ and $K_b$ would yield a path in $G-X$
that connects two inner vertices of $B_a$ and $B_b$, contradicting Lemma \ref{blocks}.

Therefore, each vertex of $U$ lies in a different component of $R-X$.
But we can only delete at most $k$
vertices, and each vertex in a hexagonal grid has at most 3 neighbors, thus we can conclude
that $|U| \leq 3k$. As for different zones $Z_1$ and $Z_2$ in $\mathcal{Z}$ we cannot have $u(Z_1)=u(Z_2)$
(which is also a consequence of Lemma \ref{blocks}) we have that $|\mathcal{Z}_3| \leq 3k$.
So if we choose the number of zones in $\mathcal{Z}$ to be
$z=7k+k(k+1)^2+kd+1$ we have that there are at least $3k+1$ zones in $\mathcal{Z}$
which are not contained in $\mathcal{Z}_1 \cup \mathcal{Z}_2 \cup \mathcal{Z}_3$,
indicating that they are closed.
Since a vertex can be contained by at most 3 zones, $|X| \leq k$ implies that
there exist a closed zone $Z^* \in \mathcal{Z}$, which does not contain any vertex from $X$,
and all grid-components attached to $Z^*$ are also disjoint from $X$.
This immediately implies that $T(Z^*)$ is a subgraph of $G-X$, and thus $T(Z^*)$ is planar.
\qed
\end{proof}

\textbf{Algorithm for Phase I.}
The exact steps of Phase I of the algorithm $\mathcal{A}$ are
shown in Fig. \ref{alg_phase_I}.
It starts with running algorithm $\mathcal{B}$ on the graph $G$ and integers
$w(r,k)$ and $r$.
If $\mathcal{B}$ returns a hexagonal grid as a topological minor, then the algorithm
proceeds with the next step.
If $\mathcal{B}$ returns a tree decomposition $\mathcal{T}$ of width $w(r,k)$, then Phase I returns
the triple $(G,W,\mathcal{T})$. Otherwise $G$ does not have $H_r$ as minor and its
treewidth is larger than $w(r,k)$, so by Lemma \ref{alg_b} we can conclude that $G \notin \textup{Apex}(k)$.

\begin{figure}
\begin{center}
\fbox{ \parbox{11cm}{
\textbf{Phase I of algorithm  \mbox{\boldmath $\mathcal{A}$:}} \\[4pt]
\begin{tabular}{ll}
& Input: $\ G=(V,E)$.\\[4pt]
& Let $W=\emptyset$.\\[4pt]
\end{tabular}\\
\begin{tabular}{llp{10.3cm}}
& 1. \hspace{-2pt} &
Run algorithm $\mathcal{B}$ on $G$, $w(r,k)$, and $r$. \\
& & If it returns a subgraph $H'_r$ topologically isomorphic to $H_r$ then go to Step 2.
If it returns a tree decomposition $\mathcal{T}$ of $G$, then
output($G$, $W$, $\mathcal{T}$). Otherwise output(``No solution.'').\\[4pt]
& 2. \hspace{-2pt} &
For all zones $Z$ do: \\
& & If $Z$ is flat then $G:=G-T(R_0)$, and go to Step 1. \\[4pt]& 3. \hspace{-2pt} &
Let $U=\emptyset$. For all $x \in V$: if $x$ is well-attached then $U:=U \cup \{x\}$. \\
& & If $|U| = \emptyset$ or $|W|+|U|>k$ then output(``No solution.''). \\
& & Otherwise $W:=W \cup U$, $G:=G-U$ and  go to Step 1. \\[4pt]
\end{tabular}
}}
\end{center}
\caption{Phase I of algorithm $\mathcal{A}$.}
\label{alg_phase_I}
\end{figure}

In the next step the algorithm tries to find a flat zone $Z$.
If such a zone is found, then
the algorithm executes a deletion, whose correctness is implied by
Lemma \ref{equivalence}. Note that after altering the graph,
the algorithm must find the hexagonal grid again and thus has to run $\mathcal{B}$ several times.

If no flat zone was found in Step 2, the algorithm removes well-attached
vertices from the graph in Step 3.
The vertices already removed this way are stored in $W$, and $U$ is
the set of vertices to be removed in the actual step.
By Lemma \ref{well-attached}, if $X \in \textup{ApexSets}(G,k)$ then
$W \cup U \subseteq X$, so $|W|+|U|>k$ means that there is no solution.
By Lemma  \ref{zones}, the case $U = \emptyset$ also implies $G \notin \textup{Apex}(k)$.
In these cases the algorithm stops with the output `` No solution.''
Otherwise it proceeds with updating the variables $W$ and $G$, and continues with Step 1.

The output of the algorithm can be of two types:
it either refuses the instance (outputting ``No solution.'')
or it returns an instance for Phase II.
For the above mentioned purposes the new instance is equivalent with the original problem instance
in the following sense:

\begin{theorem}
\label{kernel}
Let $(G',W,\mathcal{T})$ be the triple returned by $\mathcal{A}$ at the end of Phase I.
Then for all $X \subseteq V(G)$ it is true that $X \in \textup{ApexSets}(G,k)$ if and only if
$W \subseteq X$ and $(X \setminus W) \in \textup{ApexSets}(G',k-|W|)$.
\end{theorem}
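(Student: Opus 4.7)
The plan is to establish the claim by induction on the number of main-loop iterations executed by Phase I before it outputs the triple. The invariant I carry from one iteration to the next is exactly the statement of the theorem relativized to the current intermediate graph: after $i$ iterations, with current graph $H_i$ and collected set $W_i$, for every $X\subseteq V(G)$ we have $X\in\textup{ApexSets}(G,k)$ if and only if $W_i\subseteq X$ and $X\setminus W_i\in\textup{ApexSets}(H_i,k-|W_i|)$. At $i=0$ we have $H_0=G$ and $W_0=\emptyset$, so the invariant is trivial; at termination it is exactly the conclusion of the theorem.

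For the inductive step I distinguish the two possible modifications. In Step~3 the algorithm replaces $W_i$ by $W_{i+1}=W_i\cup U$ and $H_i$ by $H_{i+1}=H_i-U$, where the non-empty set $U$ consists of vertices that are well-attached in $H_i$. Applying Lemma~\ref{well-attached} to $H_i$ with budget $k-|W_i|$ gives, for each $Y\subseteq V(H_i)$, the equivalence that $Y\in\textup{ApexSets}(H_i,k-|W_i|)$ holds if and only if $U\subseteq Y$ and $Y\setminus U\in\textup{ApexSets}(H_{i+1},k-|W_i|-|U|)$; the backward direction is immediate, since removing $U$ from $H_i$ and then $Y\setminus U$ yields the same graph as removing $Y$ from $H_i$. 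Substituting $Y=X\setminus W_i$ into the inductive hypothesis delivers the invariant for $H_{i+1}$ and $W_{i+1}$.

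In Step~2 the algorithm replaces $H_i$ by $H_{i+1}=H_i-T(R_0)$ for a detected flat zone $Z$ of $H_i$, keeping $W_{i+1}=W_i$. Lemma~\ref{equivalence} supplies the nontrivial implication that any $Y\in\textup{ApexSets}(H_{i+1},k-|W_i|)$ is also in $\textup{ApexSets}(H_i,k-|W_i|)$, while the opposite implication is the easy observation that $H_{i+1}-(Y\setminus V(T(R_0)))$ is a subgraph of the planar graph $H_i-Y$ and is therefore planar. The main obstacle, as I see it, is the bookkeeping hidden in the ``for every $X\subseteq V(G)$'' quantifier during this step: if an apex set $X$ happens to contain a vertex of $V(T(R_0))$, then $X\setminus W_i$ need not be a subset of $V(H_{i+1})$. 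The clean resolution is to observe that any vertex of $X$ lying in material deleted by a Reduction~A step can be dropped from $X$ without destroying the apex property, since planarity is monotone under vertex deletion and the size budget is only loosened. Thus on both sides of the biconditional one may normalize $X$ so that $X\setminus W_i\subseteq V(H_{i+1})$ without loss of generality, and the chain of equivalences closes.
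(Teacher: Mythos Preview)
The paper states Theorem~\ref{kernel} without proof, treating it as a summary of Lemmas~\ref{equivalence} and~\ref{well-attached}; your induction over iterations, invoking Lemma~\ref{well-attached} at each Step~3 and Lemma~\ref{equivalence} at each Step~2, is precisely the intended argument, and your treatment of Step~3 is correct.

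There is, however, a genuine gap in your Step~2 analysis. You correctly spot the bookkeeping issue --- an apex set $X$ for $G$ may contain vertices of the deleted $T(R_0)$ --- but your proposed resolution does not work. You write that such a vertex ``can be dropped from $X$ without destroying the apex property, since planarity is monotone under vertex deletion.'' This reasoning points the wrong way: dropping a vertex $v$ from $X$ \emph{enlarges} $G-X$ to $G-(X\setminus\{v\})$, and there is no reason the latter is planar. Hence your normalization does not preserve the left-hand side $X\in\textup{ApexSets}(G,k)$, and the claimed ``without loss of generality'' fails.

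In fact this reveals that the theorem, read literally with the paper's definition $\textup{ApexSets}(G',k')=\{Y\subseteq V(G'):|Y|\le k',\ G'-Y\text{ planar}\}$, is slightly overstated: if $X\in\textup{ApexSets}(G,k)$ contains a vertex removed in some Reduction~A step, then $X\setminus W\not\subseteq V(G')$ and so $X\setminus W\notin\textup{ApexSets}(G',k-|W|)$. What your inductive argument \emph{does} establish --- and what the algorithm actually uses --- is the pair of implications
\[
X\in\textup{ApexSets}(G,k)\ \Longrightarrow\ W\subseteq X\ \text{and}\ (X\setminus W)\cap V(G')\in\textup{ApexSets}(G',k-|W|),
\]
\[
Y\in\textup{ApexSets}(G',k-|W|)\ \Longrightarrow\ Y\cup W\in\textup{ApexSets}(G,k).
\]
For the first, the Step~2 passage is simply that $H_{i+1}-\bigl((X\setminus W_i)\cap V(H_{i+1})\bigr)$ is an induced subgraph of the planar graph $H_i-(X\setminus W_i)$; for the second, Lemma~\ref{equivalence} applies verbatim. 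So the fix is not to normalize $X$ but to intersect with $V(G')$ on the right-hand side (equivalently, to relax the definition of $\textup{ApexSets}$ to allow supersets of $V(G')$); with that adjustment your proof is complete.
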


Now let us examine the running time of this phase.
The first step can be done in time $O(f''(k)n)$ according to \cite{sey94,bod96,per00}
where $n=|V(G)|$.
Since the algorithm only runs algorithm
$\mathcal{B}$ again after reducing the number of the vertices in $G$,
we have that $\mathcal{B}$ runs at most $n$ times.
This takes $O(f''(k)n^2)$ time.
The second step requires only linear time (a breadth first search and a planarity test).
Deciding whether a vertex is well-attached can be done in time $O(f'(k)e)$ (where $e=|E(G)|$),
so we need $O(f'(k)ne)$ time to check every vertex at a given iteration in Step 3.
Note that the third step is executed at most $k+1$ times, since at each iteration $|W|$ increases.
Hence, this phase of algorithm $\mathcal{A}$ uses
total time $O(f''(k)n^2+f'(k)kne)=O(f(k)n^2)$, as the number of edges is $O(kn)$.

\section{Phase II of Algorithm $\mathcal{A}$}
\label{phase_II}

At the end of Phase I of algorithm $\mathcal{A}$ we either conclude that $G \notin \textup{Apex}(k)$,
or we have a triple $(G',W,\mathcal{T})$ for which Theorem \ref{kernel} holds. Here $\mathcal{T}$
is a tree decomposition for $G'$ of width at most $w(r,k)$.
This bound only depends on $r$ which is a function of $k$.
From the choice of the constants $r,q,z$, and $d$ we can derive by a straightforward calculation that
$\textup{tw}(G') \leq w(r,k) \leq 100 (k+2)^{7/2}$.

In order to solve our problem, we only have to find out if there is a set
$Y \in \textup{ApexSets}(G',k')$
where $k'=k-|W|$. For such a set, $Y \cup W$ would yield a solution for the original
$k$-\textsc{Apex} problem.

A theorem by Courcelle states that every graph property defined by a formula in monadic
second-order logic (MSO) can be evaluated in linear time if the input graph has bounded treewidth.
Here we consider graphs as relational structures of vocabulary $\{V,E,I\}$,
where $V$ and $E$ denote unary relations interpreted as the vertex set and the edge set of the graph,
and $I$ is a binary relation interpreted as the incidence relation. For instance,
a formula stating that $x$ and $y$ are neighboring vertices is the following: $\exists e: Ixe \land Iye$.
We will denote by $U^G$ the universe of the graph $G$, i.e.,  $U^G = V(G) \cup E(G)$.
Variables in monadic second-order logic can be element or set variables, and the
containment relation between an element variable $x$ and a set variable $X$ is simply expressed by the formula $Xx$.
For a survey on MSO logic on graphs see \cite{cou97}.

Following Grohe \cite{gro04}, we use a strengthened version of Courcelle's Theorem:

\begin{theorem} \emph{(\cite{flu02})}
\label{mso}
Let
$\varphi(x_1, \dots, x_i, X_1, \dots, X_j, y_1, \dots, y_p, Y_1, \dots, Y_q)$
denote an MSO-formula and let $w \geq 1$.
Then there is a linear-time algorithm that, given a graph  $G$ with \emph{$\textup{tw}(G) \leq w$}
and $b_1, \dots, b_p \in U^G, B_1, \dots, B_q \subseteq U^G$, decides whether
there exist $a_1, \dots, a_i \in U^G, A_1, \dots, A_j \subseteq U^G$ such that
\begin{center}
\vspace{-6pt}
$G \vDash \varphi(a_1, \dots, a_i, A_1, \dots, A_j, b_1, \dots, b_p, B_1, \dots, B_q)$,
\vspace{-6pt}
\end{center}
and, if this is the case, computes such elements $a_1, \dots , a_i$ and sets $A_1, \dots, A_j$.
\end{theorem}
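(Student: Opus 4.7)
The plan is to reduce Theorem \ref{mso} to the classical form of Courcelle's Theorem (decision version with no free variables) in two stages: first, absorb the given parameters $b_1, \dots, b_p, B_1, \dots, B_q$ into the relational structure so that the formula becomes parameter-free; second, upgrade the resulting decision procedure to a witness-producing one by a second pass over the tree automaton underlying Courcelle's proof.

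For the absorption of parameters, I would enlarge the vocabulary by fresh unary predicates $U_{b_1}, \dots, U_{b_p}, U_{B_1}, \dots, U_{B_q}$, interpreting $U_{b_\ell}$ as the singleton $\{b_\ell\}$ and $U_{B_\ell}$ as $B_\ell$, and rewrite $\varphi$ as an MSO-formula $\varphi'(x_1, \dots, x_i, X_1, \dots, X_j)$ that uses these predicates in place of $y_\ell$ and $Y_\ell$. The enriched structure has the same treewidth as $G$, so a width-$w$ tree decomposition $\mathcal{T}$ can be computed in linear time via Theorem \ref{thm_treedecomp}. From this point on, $\varphi'$ has only the search variables as free variables.

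For the witness extraction, Courcelle's construction associates with the sentence $\exists x_1 \dots \exists X_j\,\varphi'$ a finite bottom-up tree automaton $\mathcal{A}$ running on nice tree decompositions of width $w$; the states of $\mathcal{A}$ encode the MSO-type of the processed subgraph up to the quantifier rank of $\varphi'$, together with the restriction of a candidate assignment to the current bag. In a first bottom-up traversal of $\mathcal{T}$, I would compute at each node $t$ the set $S_t$ of states reachable over some labelling of the subtree rooted at $t$; since $|S_t|$ is bounded by a function of $w$ and $|\varphi|$, this takes linear time in total. In a second top-down traversal, starting from some accepting state at the root, I would commit at each node to one concrete labelling of the bag consistent with the state chosen at the parent and extendable by some entry of $S_{t'}$ at each child $t'$. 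Assembling the committed bag-labellings over all nodes yields the required witnesses $a_1, \dots, a_i, A_1, \dots, A_j$.

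The main obstacle, and the true content of the strengthening, is ensuring that the top-down commitments can indeed be made locally: one must encode enough information in the automaton state so that knowing the parent's state and the reachability tables $S_{t'}$ at the children suffices to pick a bag-labelling that provably extends to a globally accepting run. This is guaranteed by the standard inclusion of the MSO-type of the processed subgraph relative to the bag in the state; with this design, each local decision reduces to a constant-time table lookup (where the constant depends only on $w$ and $|\varphi|$), and the total running time is linear in the size of $\mathcal{T}$, hence in $|U^G|$, as claimed.
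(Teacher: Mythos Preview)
The paper does not prove this theorem at all: it is quoted verbatim from \cite{flu02} (Flum--Frick--Grohe) and used as a black box, so there is no ``paper's own proof'' to compare your attempt against. Your sketch is the standard route---absorb the input parameters as fresh unary predicates, compile the resulting sentence into a bottom-up tree automaton over a width-$w$ nice tree decomposition, and recover witnesses by a second top-down pass---and it is essentially correct as an outline; the only caveat is that the details you flag as ``the main obstacle'' (making the state space rich enough that the top-down greedy choice is locally determined) are exactly where the real work lies, and a fully rigorous write-up would need to spell out the state set and transition function rather than appeal to ``the standard inclusion of the MSO-type.'' For the purposes of this paper, however, a citation suffices.
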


It is well-known that there is an MSO-formula $\varphi_{\textrm{planar}}$ that describes the planarity of graphs,
i.e., for every graph $G$ the statement $G \vDash \varphi_{\textrm{planar}}$ holds if and only if $G$ is planar.
The following simple claim shows that we can also create a formula describing the $\textup{Apex}(k)$ graph class.

\begin{theorem}
\label{planar_mso}
There exists an MSO-formula $\textup{apex}(x_1, \dots, x_{k'})$ for which the statement
$G \vDash \textup{apex}(v_1, \dots, v_{k'})$ holds if and only if
$\{v_1, \dots, v_{k'} \} \in \textup{ApexSets}(G,k)$.
\end{theorem}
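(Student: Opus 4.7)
The plan is to obtain $\textup{apex}(x_1,\dots,x_{k'})$ from the already-available planarity formula $\varphi_{\textrm{planar}}$ by \emph{relativization}: since ``$G$ is planar after deleting $\{v_1,\dots,v_{k'}\}$'' means precisely ``$\varphi_{\textrm{planar}}$ holds in the substructure induced on the elements of $U^G$ that survive the deletion of $v_1,\dots,v_{k'}$,'' we only need to rewrite every quantifier of $\varphi_{\textrm{planar}}$ so that it ranges over surviving elements. This is a purely syntactic transformation.

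To make this precise, I would first introduce two abbreviations that use the free variables $x_1,\dots,x_{k'}$:
\begin{align*}
\textup{survV}(x) &\;:=\; V(x) \land \bigwedge_{i=1}^{k'} \neg(x=x_i),\\
\textup{survE}(e) &\;:=\; E(e) \land \bigwedge_{i=1}^{k'} \neg I(x_i,e).
\end{align*}
Intuitively, $\textup{survV}(x)$ says ``$x$ is a vertex other than $x_1,\dots,x_{k'}$,'' and $\textup{survE}(e)$ says ``$e$ is an edge not incident to any $x_i$,'' so together they pick out exactly the universe of $G-\{x_1,\dots,x_{k'}\}$. Then I would define $\textup{apex}(x_1,\dots,x_{k'})$ to be the relativization $\widehat{\varphi}_{\textrm{planar}}$ of $\varphi_{\textrm{planar}}$ obtained by the following standard recursion on formula structure: replace each element quantifier $\exists y \,\psi$ by $\exists y\,\bigl((\textup{survV}(y)\lor\textup{survE}(y))\land \widehat{\psi}\bigr)$ (and dually for $\forall$), each set quantifier $\exists Y\,\psi$ by $\exists Y\,\bigl(\forall z\,(Yz \rightarrow (\textup{survV}(z)\lor\textup{survE}(z)))\land \widehat{\psi}\bigr)$, and leave the atomic formulas $V$, $E$, $I$, $=$ unchanged except that every atomic occurrence of $V(y)$ (respectively $E(y)$) inside $\varphi_{\textrm{planar}}$ is replaced by $\textup{survV}(y)$ (respectively $\textup{survE}(y)$), so that the surviving elements play the role of the vertex/edge sets.

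Correctness is then a routine induction on the structure of $\varphi_{\textrm{planar}}$: for each subformula $\psi$ one shows that $G\vDash\widehat{\psi}(v_1,\dots,v_{k'},\dots)$ iff $G-\{v_1,\dots,v_{k'}\}\vDash\psi(\dots)$, with quantified variables ranging over the universe of the smaller graph. Applied to $\varphi_{\textrm{planar}}$ itself this yields that $G\vDash\widehat{\varphi}_{\textrm{planar}}(v_1,\dots,v_{k'})$ iff $G-\{v_1,\dots,v_{k'}\}$ is planar. Since the set $\{v_1,\dots,v_{k'}\}$ has at most $k'\leq k$ elements, this is equivalent to $\{v_1,\dots,v_{k'}\}\in\textup{ApexSets}(G,k)$, as required.

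I do not expect any real obstacle here; the only delicate point is being careful that the relativization treats element and set quantifiers uniformly over the joint universe $V(G)\cup E(G)$, and that every atomic predicate that implicitly asserts membership in the vertex or edge set is guarded appropriately so that, for example, a surviving edge is never paired via $I$ with a deleted vertex when reconstructing the incidence relation of $G-\{v_1,\dots,v_{k'}\}$.
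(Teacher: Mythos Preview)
Your argument is correct, but it follows a different route from the paper's own proof. The paper does not relativize the black-box formula $\varphi_{\textrm{planar}}$; instead it builds $\textup{apex}$ explicitly from Kuratowski's theorem: it writes auxiliary MSO predicates $\textup{partition}$, $\textup{connected}$, $\textup{disjoint}$, $\textup{almost-disjoint}$, uses these to express ``this tuple of vertices and vertex sets realizes a subdivision of $K_5$ (resp.\ $K_{3,3}$),'' and then defines $\textup{apex}(x_1,\dots,x_{k'})$ to say that every such subdivision must hit one of the $x_i$. Your relativization approach is more general and more modular---it would work unchanged with any MSO-definable hereditary property in place of planarity, and it avoids re-deriving a planarity formula---whereas the paper's construction is more concrete and self-contained, making the resulting formula fully explicit rather than relying on an unspecified $\varphi_{\textrm{planar}}$. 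The one point you should be slightly more careful about is the last subtlety you flag: since incidence is a relation between a vertex and an edge, and your quantifier guards already forbid deleted vertices and edges incident to them, the atomic formula $I$ actually needs no further guarding; your parenthetical about ``a surviving edge paired via $I$ with a deleted vertex'' cannot arise once the element quantifiers are restricted.
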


\begin{proof}
We will use the simple characterization of planar graphs by Kuratowski's Theorem:
a graph is planar if and only if it does not contain any subgraph topologically isomorphic to $K_5$ or $K_{3,3}$.
To formulate the existence of these subgraphs as an MSO-formula, we need some more simple formulas.

First, it is easy to see that the following formula expresses the property that $(X,Y)$ is a partition of the set $Z$:
$$\textup{partition}(X,Y,Z) := \forall z:
\left( Zz  \rightarrow
\left(
\left( Xz \rightarrow \neg Yz \right) \land \left( \neg Xz \rightarrow Yz \right)
\right)
\right)
$$
Using this, we can express that the vertex set $Z$ contains a path connecting $a$ and $b$, by saying that
every partition of $Z$ that separates $a$ and $b$ has to separate two neighboring vertices:
\begin{eqnarray*}
& &
\textup{connected}(a,b,Z) := Za \land Zb \land \forall X \forall Y: \\
& & \quad \left(
\left( \textup{partition}(X,Y,Z) \land Xa \land Yb  \right) \rightarrow
\left( \exists c \exists d \exists e: Xc \land Yd \land Ice \land Ide \right)
\right)
\end{eqnarray*}
The following two formulas express that two sets are disjoint, or their intersection is some given unique vertex.
\begin{eqnarray*}
& &
\textup{disjoint}(X,Y) := \forall z: (Xz \rightarrow \neg Yz) \\
& &
\textup{almost-disjoint}(X,Y,a) := \forall z: (Xz \rightarrow (\neg Yz \lor (z=a)))
\end{eqnarray*}

Now, we can state formulas expressing that a given subgraph has $K_5$ or $K_{3,3}$ as a topological minor.
For brevity, we only give the formula which states that there is a subdivision of $K_5$ in the graph
such that the vertices $v_1, v_2, \dots, v_5$ correspond to the vertices of the $K_5$, and the
vertex sets $P_{12}, P_{13}, \dots, P_{45}$ contain the subdivisions of the corresponding edges of $K_5$.
\begin{eqnarray*}
& &
\textup{$K_5$-top-minor }(v_1,v_2, \dots, v_5, P_{12},P_{13},\dots, P_{45}) :=  \\
& &
\quad \textup{connected}(v_1,v_2,P_{12}) \land \dots \land \textup{connected}(v_4,v_5,P_{45}) \land \\
& &
\quad \textup{almost-disjoint}(P_{12},P_{13},v_1) \land \dots \land \textup{almost-disjoint}(P_{35},P_{45},v_5) \land \\
& &
\quad \textup{disjoint}(P_{12},P_{34}) \land \dots \land \textup{disjoint}(P_{23},P_{45})
\end{eqnarray*}
The formula $\textup{$K_{3,3}$-top-minor}$ can be similarly created. Now, we are ready to give the $\textup{apex}$ formula,
which uses the fact that $G-X$ is planar if and only if every subdivision of $K_5$ or $K_{3,3}$ in $G$ must involve at least one vertex from $X$.
\begin{eqnarray*}
& &
\textup{apex}(v_1,v_2, \dots, v_{k'}) :=  \\
& &
\quad \forall x_1 \forall x_2 \dots \forall x_5 \forall X_1 \forall X_2 \dots \forall X_{10}:
(\textup{$K_5$-top-minor}(x_1, \dots, x_5,X_1, \dots, X_{10}) \\
& &
\quad \quad \rightarrow ((x_1=v_1) \lor \dots \lor (x_5=v_{k'})  \lor X_1 v_1 \lor \dots \lor X_{10} v_{k'})) \land \\
& &
\quad \forall x_1 \forall x_2 \dots \forall x_6 \forall X_1 \forall X_2 \dots \forall X_9:
(\textup{$K_{3,3}$-top-minor}(x_1,\dots, x_6,X_1, \dots, X_9) \\
& &
\quad \quad \rightarrow((x_1=v_1) \lor \dots \lor (x_6=v_{k'})  \lor X_1 v_1 \lor \dots \lor X_{9} v_{k'})) \\
\end{eqnarray*}
\qed
\end{proof}

Now let us apply Theorem \ref{mso}.
Let $\mathcal{C}$ be the algorithm which, given a graph $G$ of bounded treewidth,
decides whether there exist $v_1, \dots, v_{k'} \in U^G$ such that
$G \vDash \textup{apex}(v_1, \dots, v_{k'})$ is true, and if possible, also produces such variables.
By Theorem \ref{planar_mso}, running $\mathcal{C}$ on $G'$ either returns a set of vertices
$U \in \textup{ApexSets}(G', k')$, or reports that this is not possible.
Hence, we can finish algorithm $\mathcal{A}$ in the following way:
if $\mathcal{C}$ returns $U$ then output($U \cup W$), otherwise
output(``No solution'').

The running time of Phase II is $O(g(k)n)$ for some function $g$.

\begin{remark}
Phase II of the algorithm can also be done by applying dynamic programming,
using the tree decomposition $\mathcal{T}$ returned by $\mathcal{B}$.
This also yields a linear-time algorithm, with a double exponential dependence on $\textup{tw}(G')$ (and hence on $k$).
Since the proof is quite technical and detailed, we omit it.
\end{remark}

\end{document}